\newenvironment{proof}{\paragraph{Proof:}}{\hfill(Proved)}
\renewcommand\@biblabel[1]{#1.}
\newtheorem{theorem}{Theorem}
\newtheorem{result}{Result}
\newtheorem{example}{Example}
\renewcommand\@biblabel[1]{#1.}
\newcommand{\esp}{\end{sloppypar}}
\newcommand{\be}{\begin{equation}}
\newcommand{\ee}{\end{equation}}
\newcommand{\beano}{\begin{eqnarray*}}
\newcommand{\eeano}{\end{eqnarray*}}
\newcommand{\bea}{\begin{eqnarray}}
\newcommand{\eea}{\end{eqnarray}}
\newcommand{\ba}{\begin{array}}
\newcommand{\ea}{\end{array}}
\newcommand{\bc}{\begin{center}}
\newcommand{\ec}{\end{center}}
\begin{document}
\title{Bayesian reliability acceptance sampling plan sampling plans under adaptive accelerated type-II censored competing risk data }
\author[label1]{Rathin Das\corref{cor1}}
\cortext[cor1]{Corresponding author}
\ead{rathindas65@gmail.com}
\author[label2]{Soumya Roy}
\author[label1]{Biswabrata Pradhan}
\address[label1]{SQC and OR Unit, Indian Statistical Institute, Kolkata, India}
\address[label2]{Indian Institute of Management Kozhikode, Kozhikode, Pin 673570, India}
\begin{abstract}
   In recent times, products have become increasingly complex and highly reliable, so failures typically occur after long periods of operation under normal conditions and may arise from multiple causes. This paper employs simple step-stress partial accelerated life testing (SSSPALT) within the competing risks framework to determine the Bayesian reliability acceptance sampling plan (BRASP) under type-II censoring. Elevating the stress during the life test incurs an additional cost that increases the cost of the life test. In this context, an adaptive scenario is also considered in that sampling plan. The adaptive scenario is as follows: the stress is increased after a certain time if the number of failures up to that point is less than a pre-specified number of failures. The Bayes decision function and Bayes risk are derived for the general loss function. An optimal BRASP under that adaptive SSSPALT is obtained for the quadratic loss function by minimizing Bayes risk. An algorithm is provided to determine the optimal proposed BRASP. Further, comparative studies are conducted between the proposed BRASP, the conventional non-accelerated BRASP, and the conventional accelerated BRASP under type-II censoring to evaluate the effectiveness of the proposed approach. Finally, the methodology is illustrated using real data.

\end{abstract}
\begin{keyword}
    Competing risk models, cumulative exposure model, Step-stress accelerated life test, exponential distribution, Bayesian decision theory
\end{keyword}\
\maketitle
\section{Introduction}

\indent
In today's world, maintaining high standards of product quality has become a strategic priority for manufacturers across sectors such as automotive, aerospace, consumer electronics, and defense systems. Failures in the field pose a risk to brand reputation and lead to excessive warranty claims, expensive recalls, and, eventually, a loss of market share. Consequently, industries rely on rigorous acceptance sampling plans (ASPs) to decide whether production lots meet predefined quality standards. Among various approaches, Bayesian ASPs (henceforth, BASPs), grounded in well-established decision-theoretic principles, have gained considerable attention in the extant literature, primarily due to their ability to incorporate available prior information and account for economic considerations, such as maximizing expected return or minimizing expected loss.\\
\indent In practice, the decision to accept or reject a lot is often based on a product's reliability or other lifetime characteristics. Sampling plans that incorporate such criteria are commonly referred to in the literature as Reliability Acceptance Sampling Plans (RASPs). However, real-world reliability testing is frequently constrained by limited resources, such as time, cost, or availability of test units. To address these limitations, censoring schemes such as type-I (fixed time) and type-II (fixed number of failures) are widely employed. A series of works—including \citet{yeh1990optimal, yeh1994bayesian, yeh1995bayesian} and \citet{chen2007bayesian, lin2008exact, lin2010corrections, prajapati2019new}—have examined Bayesian variable sampling plans under various such censoring schemes for exponential distributions. While these works utilize prior information to compute Bayes risk, they do not derive decision functions through formal posterior risk minimization. A significant advancement was introduced by \citet{lin2002bayesian} and later extended by \citet{liang2013optimal}, who presented a Bayesian decision-theoretic argument for deriving the decision function by minimizing the expected posterior loss function.

\indent Most of the afore-mentioned works, however, assume that each test unit fails due to a single cause. However, modern industrial products, characterized by ever-increasing structural complexity and operation under diverse environmental conditions, often fail due to multiple, distinct causes. For example, a vehicle component might fail due to either a surface defect or an internal flaw. In order to address the presence of multiple causes of failure, reliability engineers employ \emph{competing risk models}, which consider both the time to failure and the failure cause \citep{kalbfleisch2002statistical, lawless2011statistical}. Although classical inference under competing risks has been widely studied, applications of Bayesian Sampling Plans (BSPs) in this context remain limited. A recent contribution by \citet{prajapati2023optimal} addressed BSPs for exponential distributions under hybrid censoring with competing risks, laying the groundwork for further exploration.

\indent
In high-reliability industrial products, the mean time to failure under normal operating conditions is often extremely large. Conducting full-duration life tests in such settings is impractical due to cost and time constraints. Accelerated Life Tests (ALTs), wherein products are subjected to elevated stress levels to hasten failure, offer a viable alternative. However, ALTs depend on extrapolation models to relate accelerated and normal conditions, which may not be reliable for new products lacking sufficient historical data. To mitigate this, \emph{Partial Accelerated Life Testing (PALT)} introduces a more realistic strategy—testing some units under normal stress while others undergo higher stresses—offering a balance between realism and efficiency.

\indent
One effective PALT strategy is the \emph{Step-Stress PALT (SSPALT)}, where test units initially experience a baseline stress level \( s_0 \), which is subsequently elevated to higher levels \( s_1 < s_2 < \dots < s_k \) at pre-specified time points \( \tau_1 < \tau_2 < \dots < \tau_k \). The simplest case, called Simple Step-Stress PALT (SSSPALT), involves only one stress elevation. To connect lifetime distributions across stress levels, models such as the Cumulative Exposure Model (CEM)—originally proposed by \citet{sedyakin1966relationship} and later explored by \citet{kundu2017analysis}—are widely used. Bayesian sampling plans under SSSPALT for exponential and Weibull distributions have been examined in recent studies \citep{chen2022designing, chen2023designing, prajapati2024bayesian, chen2025designing}, though mostly under single-cause failure models.

\indent
In practice, ALTs present a trade-off: while they accelerate data collection and reduce time-related costs, increasing stress levels may introduce additional risks and costs, especially when overstressing leads to non-representative failure modes. To address this, we propose an \emph{Adaptive Bayesian Sampling Plan (BSP)} that dynamically adjusts the stress level based on real-time failure data. Specifically, after a pre-determined time \( \tau_1 \), the decision to increase stress from \( s_0 \) to \( s_1 \) depends on whether the number of failures \( d_1 \) up to \( \tau_1 \) is less than a threshold \( m \). If \( d_1 < m \), stress is elevated; otherwise, it remains unchanged. This framework generalizes both non-accelerated (when \( m = 0 \)) and fully accelerated (when \( m = n \)) plans, offering a unified structure to evaluate trade-offs and guide industrial testing strategy.

\indent
Adaptive designs are not new in reliability testing—see, for example, \citet{xiang2017designing}—but their integration into BSPs under step-stress testing with competing risks and censoring remains unexplored. From a production planning and quality assurance perspective, such adaptive strategies can lead to more economical and informative testing procedures, enabling better lot disposition decisions under real-world constraints. While classical and Bayesian inference methods under competing risks and step-stress schemes have been developed \citep{han2010inference, balakrishnan2008exact, han2014inference, samanta2019analysis}, the design of BSPs under SSALT with censoring and competing risks has received little attention. This paper aims to address this gap by developing an adaptive Bayesian Sampling Plan under SSSPALT for exponential lifetimes with type-II censoring and competing risks. We derive the Bayes decision rule under general and specific loss functions, evaluate its performance through numerical studies, and demonstrate its practical implications via a motivating example.

\indent 
The remainder of this paper is organized as follows. In Section~\ref{model}, we introduce the framework of the proposed \textit{Adaptive Accelerated Bayesian Sampling Plan (AABSP)} for the exponential distribution under a \textit{Simple Step-Stress Partial Accelerated Life Test (SSSPALT)} with competing risks and type-II censoring. We examine the monotonicity property of the Bayes decision function under a general loss function and utilize this property to derive explicit expressions for both the Bayes decision rule and the Bayes risk function in Section \ref{bayesrisk}. In Section~\ref{de}, we consider specific prior distributions—Gamma and uniform—and adopt a quadratic loss function to obtain closed-form expressions for the Bayes decision function and the corresponding Bayes risk. Section \ref{algorithm} presents an algorithm to find the optimal AABSP. In Section~\ref{num}, optimum AABSP is computed under different scenarios, and the effect of the parameters on optimum AABSP is studied. In Section~\ref{com}, real data is used to study an evaluation of the performance of the proposed AABSP when time point $\tau_1$ is fixed. In both Section \ref{num} and Section \ref{com}, a comparison study between AABSP and BRASP through a non-accelerated life test, referred to as CBSP in this paper, as well as between AABSP and CACBSP under type-II censoring for competing risk data, is given. Also, in Section \ref{com}, we demonstrate the concept and practical relevance of the proposed AABSP. Finally, Section~\ref{con} concludes the paper and discusses directions for future research, including possible extensions to other lifetime distributions, alternative loss structures, and more complex stress escalation schemes.

\section{Background Theory}\label{model}

\subsection{Competing Risks Model for SSSPALT}
\indent Consider a life test in which $n$ identical test units are randomly selected from a production lot and placed on a life test at time $t_0$ under an initial stress level $s_0$, representing normal usage conditions. At a pre-specified inspection time $\tau_1$, the stress level is increased from $s_0$ to a higher level $s_1$ to accelerate the test. This setup is commonly referred to in the literature as Simple Step-Stress Partially Accelerated Life Testing (SSSPALT). \\
\indent Suppose that each test unit may fail due to one of $J$ mutually exclusive and independent causes, often referred to in the literature as \textit{competing risks}. Let $X_{ij}$ denote the potential failure time of a unit due to the $j$-th cause under stress level $s_i$, where $j=1,\ldots,J$ and $i=0,1$. We further assume that $X_{ij}$s are independent exponentials, having the probability density function (p.d.f.) 
$$f_{ij}(x\ | \ \lambda_{ij})=\lambda_{ij} \exp(-\lambda_{ij}x),~~\text{for}~~x>0,$$
and cumulative distribution function (c.d.f.)
$$
F_{ij}(x\ | \ \lambda_{ij})=1-\exp(-\lambda_{ij} x), ~~~~x>0.
$$
\indent 
Let $T_i$ denote the actual failure time of the unit at stress level $s_i$. Since failure occurs as soon as any one of the $J$ competing risks triggers a failure, we have $T_i=\min\{X_{i1},\ldots,X_{iJ}\}$. It is easy to see that $T_i$ follows an exponential distribution, having the p.d.f.
\begin{align}
f_{T_i}(t\ | \ \boldsymbol{\lambda}_i)=\left(\sum_{j=1}^J \lambda_{ij}\right) \exp\left[-\left(\sum_{j=1}^J \lambda_{ij}\right)t\right], \label{DENS_T}
\end{align}
where $\boldsymbol{\lambda}_i=(\lambda_{i1},\ldots,\lambda_{iJ})$. Furthermore, let $C_i \in \{1,\ldots,J\}$ denote the cause of failure at stress level $s_i$. In a competing risks set up, the observed data typically consist of realizations of the random pair $(T_i, C_i)$. The joint p.d.f. of $(T_i,C_i)$ at stress level $s_i$ is given by
\begin{align}
    f_{(T_i,C_i)}(t,j\ | \boldsymbol{\lambda}_i)=\lambda_{ij}\exp\left[-\left(\sum_{j=1}^J \lambda_{ij}\right) t\right].\label{DENS_CT}
\end{align}
\subsection{Competing Risk Model under an Adaptive SSSPALT}
\indent We now consider an adaptive version of the SSSPALT scheme, where the decision to elevate stress from $s_0$ to $s_1$ is based on early failure information. Let $D_1$ be the number of failures observed at $\tau_1$ in a SSSPALT scheme.  Based on the value of $D_1$, a decision is made regarding the continuation of the test: if $D_1<m$, where $m$ is a pre-decided threshold, the stress level is increased from $s_0$ to a higher level $s_1$ to accelerate the test. Otherwise, the stress level remains unchanged at $s_0$. Regardless of the stress path followed, the life test continues until a total of $r ~(\leq n)$ failures are observed, in line with the traditional Type-II censoring scheme. Obviously, we have $m \le r \le n$. This adaptive SSSPALT generalizes the traditional step-stress model by incorporating a data-driven mechanism for stress adjustment, allowing potential savings in test resources when sufficient early failure data are available at normal usage conditions. \\
\indent The time-varying stress level $s$ experienced by a test unit at an adaptive SSSPALT is thus defined as:
\begin{align*}
    s=\begin{dcases*}
        s_0 & if $t\leq \tau_1$\\
        s_0 & if $t> \tau_1 ~~ \& ~~D_1\geq m$\\
        s_1 & if $t>\tau_1~~ \& ~~D_1<m$
    \end{dcases*}.
\end{align*}
Let $T$ and $C$ denote the actual failure time and the corresponding cause of failure at $s$. Under the Cumulative Exposure Model (see \citet{nelson2009accelerated,kundu2017analysis} for details), the c.d.f. of $T$ at the stress level $s$ is given by   \begin{align*}
    F_{T}(t\mid \boldsymbol{\theta})=
  1-\exp\left[-\left(\sum_{j=1}^J\lambda_{0j}\right) \tau_1+\left(\sum_{j=1}^J\lambda_{1j}\right) (t-\tau_1)\right],
\end{align*} 
provided $t>\tau_1$ and $D_1<m$, where $\boldsymbol{\theta}$ be the vector of the parameter. Here $\boldsymbol{\theta}=(\boldsymbol{\lambda}_0,\boldsymbol{\lambda}_1)$. Furthermore, the corresponding joint p.d.f. of $(T,C)$ is given by 
 \begin{align*}
     f_{(T,C)}(t,j\ | \ \boldsymbol{\theta})=\lambda_{1j}\exp\left[-\left(\sum_{j=1}^J\lambda_{0j}\right) \tau_1+\left(\sum_{j=1}^J\lambda_{1j}\right) (t-\tau_1)\right].
 \end{align*}
\indent It is easy to understand that the c.d.f. of $T$ and the joint p.d.f. of $(T, C)$ in the remaining cases can easily be obtained from (\ref{DENS_T}) and (\ref{DENS_CT}). For completeness, these expressions for the c.d.f. and joint p.d.f. in the general case (both stress paths) are summarized as follows. Let ${\delta}$ be the indicator function defined by
\begin{align*}
    {\delta}=\begin{cases}
        1 & D_1<m\\
        0 & D_1\geq m.
    \end{cases}
\end{align*}
Then, the c.d.f. of $T$ at the stress level at $s$ is given by
\small \begin{align*}
     F_{T}(t\mid \boldsymbol{\theta})=\begin{dcases*}
         1-\exp\left[-\left(\sum_{j=1}^J\lambda_{0j}\right)t\right]& if $t\leq \tau_1$\\
         \left[1-\exp\left\{-\left(\sum_{j=1}^J\lambda_{0j}\right)t\right\}\right]^{1-\delta}\left[ 1-\exp\left\{-\left(\sum_{j=1}^J\lambda_{0j}\right) \tau_1+\left(\sum_{j=1}^J\lambda_{1j}\right)(t-\tau_1)\right\}\right]^\delta& if $t> \tau_1$\\
     \end{dcases*}.
 \end{align*}
 Similarly, the corresponding joint p.d.f. of $(T,C)$ is given by
 \begin{align*}
     f_{(T,C)}(t,j\mid \boldsymbol{\theta})=\begin{dcases*}
         \lambda_{0j}\exp\left[-\left(\sum_{j=1}^J\lambda_{0j}\right)t\right]& if $t\leq \tau_1$\\
         \left[\lambda_{0j}\exp\left\{-\left(\sum_{j=1}^J\lambda_{0j}\right)t\right\}\right]^{1-\delta}\left[ \lambda_{1j}\exp\left\{-\left(\sum_{j=1}^J\lambda_{0j}\right) \tau_1+\left(\sum_{j=1}^J\lambda_{1j}\right) (t-\tau_1)\right\}\right]^\delta& if $t> \tau_1$,\\
     \end{dcases*}.
 \end{align*}
\normalsize\indent Let $p_{1j}$ and $p_{2j}$ denote the relative risks of failure on a test unit due to cause $j$ before and after $\tau_1$, respectively. It is easy to see that
\begin{align*}
    p_{1j}=\frac{\lambda_{1j}}{\left(\sum_{j=1}^J\lambda_{0j}\right)}
\end{align*}
and
\begin{align*}
    p_{2j}=\left[\frac{\lambda_{1j}}{\left(\sum_{j=1}^J\lambda_{0j}\right)}\right]^{1-\delta}\left[\frac{\lambda_{2j}}{\left(\sum_{j=1}^J\lambda_{1j}\right)}\right]^\delta.
\end{align*}

\subsection{Likelihood Function}
\indent Suppose $n$ identical test units are randomly selected from a production lot and placed on an adaptive SSSPALT at $t_0$ under a Type-II censoring scheme. For $j=1,\ldots,J$, let $D_{1j}$ and $D_{2j}$ be the number of failures before and after time $\tau_1$ due to risk $j$, respectively. Let the random vector $\boldsymbol{D}=(D_{11},\ldots,D_{1J},D_{21},\ldots,D_{2J})$ denote the number of failures. Furthermore, for $i=1,\ldots,r$, let $T^{i}$ and $C^i$ denote the failure time of $i$th test unit and the corresponding cause of failure. It is easy to understand that $D_{1j}=\#\{1\le i\le n: T^i<\tau_1, C^i=j\}$ and $D_{2j}=\#\{1\le i\le n: T^i>\tau_1, C^i=j\}$. Also, we have $\sum_{i}\sum_{j}D_{ij}=r$ and $D_1=\sum_{j} D_{1j}$. \\
\indent Let $T^{(i)}$  denote the $i$-th order statistic, for $i=1,\ldots,r$. Note that the lifetime of the remaining $(n-r)$ surviving units is censored at $T^{(r)}$. Then, the observed data in an adaptive SSSPALT scheme under the Type-II censoring scheme can be represented as
$$\mathcal{D}=\{T^{(1)}=t^{(1)},\ldots,T^{(r)}=t^{(r)},D_{11}=d_{11},\ldots,D_{1J}=d_{1J}, D_{21}=d_{21},\ldots,D_{2J}=d_{2J}\}.$$
\indent Toward this end, we further re-parameterize the model, for convenience. Let  $\lambda_j=\lambda_{0j}$ and $\lambda_{1j}=\lambda_{0j}\times\phi_j=\lambda_j \times\phi_j$, for $j=1,\ldots,J$. Since $\lambda_{1j}>\lambda_{0j}$, we have $\phi_j>1$ for all $j$. Thus, the  parameter vector of interest is given by $\boldsymbol{\theta}=(\lambda_1,\ldots,\lambda_J,\phi_1,\ldots,\phi_J)$. Given the observed data $\mathcal{D}$, the likelihood function can be written as 
\begin{align*}
    L( \boldsymbol{\theta}\ | \ \mathcal{D})
    \propto  &\begin{dcases}\prod_{j=1}^J\lambda_{j}^{d_{1j}}
        \exp\left[-\lambda_j\left(\sum_{i=1}^{r}t^{(i)}+(n-r)t^{(r)}\right)\right], &t^{(r)}\leq\tau_1 \\
      \prod_{j=1}^J\lambda_j^{d_{+j}} \phi_j^{\delta d_{2j}}\exp\left[-\lambda_j\left(\sum_{i=0}^{d_{1}}t^{(i)}+(n-d_{1j})\tau_1\right.\right.&\\
       +\left.\left. \phi_j^\delta\left(\sum_{i=d_{1}+1}^r\left(t^{(i)}-\tau_1\right)+(n-r)\left(t^{(r)}-\tau_1\right)\right)\right)\right], &t^{(r)}> \tau_1
    \end{dcases},
\end{align*}
where $d_{+j}=\sum_{i}d_{ij}$ and $t^{(0)}=0$.\\ 
\indent Towards this end, consider the following sufficient statistics:
$$w_1(\boldsymbol{z},\boldsymbol{d})\equiv w_1=\begin{cases}
\sum_{j=1}^{r}t^{(j)}+(n-r)t^{(r)}& \text{if } t^{(r)}\leq  \tau_1\\
 \sum_{j=0}^{d_1}t^{(j)}+(n-d_1)\tau_1  & \text{if } t^{(r)}>  \tau_1
\end{cases},$$ and $$w_2(\boldsymbol{z},\boldsymbol{d})\equiv w_2=\sum_{j=d_1+1}^r(t^{(j)}-\tau_1)+(n-r)(t^{(r)}-\tau_1).$$ Then, the likelihood reduces to
\begin{align} \label{e1} L( \boldsymbol{\theta}\ | \ \mathcal{D}')\propto
\begin{cases}
\prod_{j=1}^J\lambda_j^{d_{+j}}\phi^{\delta d_{2j}}\exp\left[-\lambda_j \left(w_1+\phi_j^\delta w_2\right)\right],
\end{cases}
\end{align}
where $\mathcal{D}'=(w_1,w_2,\boldsymbol{d})$ is the transformed observed data. In order to obtain the MLEs of $\boldsymbol{\theta}$, we can easily maximize $ L( \boldsymbol{\theta}\ | \ \mathcal{D}')$ with respect to $\boldsymbol{\theta}$.

\subsection{Bayesian Decision-Theoretic Framework for Adaptive SSSPALT}
\indent To implement an adaptive Simple Step-Stress Partially Accelerated Life Test (SSSPALT) in practice, it is essential to carefully determine the key design variables: the total sample size $n$, the number of observed failures $r$, the early decision threshold $m$, and the stress change time $\tau_1$. For convenience, we denote these design variables collectively by $\boldsymbol{q} = (n, r, m, \tau_1)$.\\
\indent Let $\mathcal{D}' = (w_1, w_2, \boldsymbol{d})$ denote the observed data under the reparameterized form described earlier. Based on $\mathcal{D}'$, the sampling plan leads to either acceptance or rejection of the production lot. We define the decision rule $a(\mathcal{D}' \mid \boldsymbol{q})$ as:
\begin{align*}
a(\mathcal{D}' \mid \boldsymbol{q}) =
\begin{cases}
1, & \text{if the lot is accepted based on the observed data } \mathcal{D}', \\
0, & \text{if the lot is rejected}.
\end{cases}
\end{align*}

The total cost associated with the life test depends on the decision outcome. We define two loss functions:
\begin{itemize}
    \item $\mathcal{L}(\mathcal{A} \mid \mathcal{D}', \boldsymbol{\theta}, \boldsymbol{q})$: the loss incurred when the lot is accepted.
    \item $\mathcal{L}(\mathcal{R} \mid \mathcal{D}', \boldsymbol{\theta}, \boldsymbol{q})$: the loss incurred when the lot is rejected.
\end{itemize}

These are given by
\begin{align*}
\mathcal{L}(\mathcal{A} \mid \mathcal{D}', \boldsymbol{\theta}, \boldsymbol{q}) &= h(\boldsymbol{\lambda}) + nC_s - (n - r)v_s + t^{(r)} C_t + \delta(n - d_1)C_a, 
\end{align*}
and
\begin{align*}
\mathcal{L}(\mathcal{R} \mid \mathcal{D}', \boldsymbol{\theta}, \boldsymbol{q}) &= C_r + nC_s - (n - r)v_s + t^{(r)} C_t + \delta(n - d_1)C_a,
\end{align*}
where $\boldsymbol{\lambda} = (\lambda_1, \ldots, \lambda_J)$ is the vector of baseline hazard rates. The function $h(\boldsymbol{\lambda})$ represents the loss associated with accepting a lot of potentially substandard quality, and has been motivated in prior works (e.g., \citealp{prajapati2023optimal}; \citealp{chen2022designing}). 

\vspace{0.5em}
Each cost component in the above expressions is motivated as follows:
\begin{itemize}
    \item \textbf{Sampling cost ($C_s$)}: Represents the unit cost of subjecting a test unit to the life test, covering technician time, test facility use, measurement equipment, and quality monitoring. The term $nC_s$ accounts for the total sampling cost.
    
    \item \textbf{Salvage value ($v_s$)}: Units that survive the test can often be reused or sold as second-grade products. Thus, the salvage value $(n - r)v_s$ is a cost offset, reducing the total loss.
    
    \item \textbf{Operating cost ($C_t$)}: The test consumes operational resources over time—such as power, personnel, and maintenance. The cost incurred is proportional to the duration of the life test, captured via $t^{(r)} C_t$, where $t^{(r)}$ is the censoring time.
    
    \item \textbf{Accelerated stress cost ($C_a$)}: Applying elevated stress (e.g., temperature, voltage) increases marginal resource usage or risk of induced damage. The term $\delta(n - d_1)C_a$ reflects this additional cost, incurred only when the stress level is elevated.
    
    
    \item \textbf{Rejection cost ($C_r$)}: Captures the fixed penalty of rejecting a lot, encompassing material disposal, rework, delivery delays, or contractual penalties.
\end{itemize}

The expected loss for a given sampling plan $\boldsymbol{q}$ and observed data $\mathcal{D}'$ is:
\begin{align*}
\psi(\boldsymbol{q} \mid \mathcal{D}', \boldsymbol{\theta}) 
&= a(\mathcal{D}' \mid \boldsymbol{q}) \mathcal{L}(\mathcal{A} \mid \mathcal{D}', \boldsymbol{\theta}, \boldsymbol{q}) 
+ (1 - a(\mathcal{D}' \mid \boldsymbol{q})) \mathcal{L}(\mathcal{R} \mid \mathcal{D}', \boldsymbol{\theta}, \boldsymbol{q}) \\
&= a(\mathcal{D}' \mid \boldsymbol{q}) h(\boldsymbol{\lambda}) + (1 - a(\mathcal{D}' \mid \boldsymbol{q})) C_r 
+ nC_s + \delta(n - d_1)C_a - (n - r)v_s + t^{(r)} C_t.
\end{align*}

Assume that the prior distribution of $\boldsymbol{\theta} = (\boldsymbol{\lambda}, \boldsymbol{\phi})$ is given by $p(\boldsymbol{\theta})$, and that $\lambda_j$ and $\phi_j$ are mutually independent for each $j = 1, \ldots, J$. Then,
\[
p(\boldsymbol{\theta}) = \prod_{j=1}^J p_{1j}(\lambda_j) p_{2j}(\phi_j).
\]

The posterior distribution of $\boldsymbol{\theta}$ given the observed data $\mathcal{D}'$ is:
\[
p(\boldsymbol{\theta} \mid \mathcal{D}') = \frac{L(\boldsymbol{\theta} \mid \mathcal{D}') p(\boldsymbol{\theta})}{p(\mathcal{D}')}, \quad \text{where } p(\mathcal{D}') = \int L(\boldsymbol{\theta} \mid \mathcal{D}') p(\boldsymbol{\theta}) \, d\boldsymbol{\theta}.
\]

The Bayes risk for the sampling plan $(\boldsymbol{q}, a)$ is defined as the expected loss over the joint distribution of the parameters and data:
\begin{align}
R_B(\boldsymbol{q}, a) &= \mathbb{E}_{\boldsymbol{\theta}} \mathbb{E}_{\mathcal{D}' \mid \boldsymbol{\theta}} \left[\psi(\boldsymbol{q} \mid \mathcal{D}', \boldsymbol{\theta}) \right] \notag \\
&= n(C_s - v_s) + rv_s + C_a \mathbb{E}_{\boldsymbol{\theta}}[\mathbb{E}[\delta(n - D_1) ]] + C_t \mathbb{E}_{\boldsymbol{\theta}}[\mathbb{E}[T^{(r)} \mid \boldsymbol{\theta}]] + R_1(\boldsymbol{q},a), \label{eq:bayesrisk}
\end{align}
where the term $R_1(\boldsymbol{q},a)$ accounts for the decision uncertainty and is given by:
\begin{align}
R_1(\boldsymbol{q}, a) 
&= \mathbb{E}_{\boldsymbol{\theta}} \mathbb{E}_{\mathcal{D}' \mid \boldsymbol{\theta}} 
\left[ a(\mathcal{D}' \mid \boldsymbol{q}) h(\boldsymbol{\lambda}) + (1 - a(\mathcal{D}' \mid \boldsymbol{q})) C_r \right] \notag \\
&= \mathbb{E}_{\boldsymbol{\lambda}}[h(\boldsymbol{\lambda})] + \mathbb{E}_{\boldsymbol{\theta}} \mathbb{E}_{\mathcal{D}' \mid \boldsymbol{\theta}} 
\left[(1 - a(\mathcal{D}' \mid \boldsymbol{q})) (C_r - h(\boldsymbol{\lambda})) \right]. \label{eq:r1}
\end{align}

The optimal Bayesian sampling plan (BSP) is then defined as the sampling plan and decision rule that minimize the Bayes risk:
\[
(\boldsymbol{q}_B, a_B) = \arg \min_{(\boldsymbol{q}, a) \in \mathcal{C}} R(\boldsymbol{q}, a),
\]
where $\mathcal{C}$ denotes the class of all admissible sampling plans under the adaptive SSSPALT framework.

\section{Derivation of Bayes Decision Function and Bayes Risk}\label{bayesrisk}
In this section, we derive the Bayes decision function that minimizes the Bayes risk $R_B(\boldsymbol{q},a)$ over all admissible decision functions. We also provide an explicit form of the Bayes risk corresponding to the sampling plan defined in \eqref{eq:bayesrisk}.
\subsection{Bayes decision function}\label{Bayes decision}
Recall that the Bayes risk $R_B(\boldsymbol{q},a)$, as defined in Equation~\eqref{eq:bayesrisk}, depends on the decision function $a(\cdot \mid \boldsymbol{q})$ only through the term $R_1(\boldsymbol{q},a)$. This term can be written as:
\begin{align*}
   R_1(\boldsymbol{q},a)  ={\mathbb{E}}_{\boldsymbol{\theta}}[h(\boldsymbol{\lambda})]+{\mathbb{E}}_{\boldsymbol{\theta}}{\mathbb{E}}_{\mathcal{D}'\ | \ \boldsymbol{\theta}} [(1-a(\mathcal{D}'\ | \ \boldsymbol{q}))(C_r-h(\boldsymbol{\lambda}))]\\
   ={\mathbb{E}}_{\boldsymbol{\theta}}[h(\boldsymbol{\lambda})]+{\mathbb{E}}_{\mathcal{D}'}{\mathbb{E}}_{\boldsymbol{\theta}\ | \ \mathcal{D}'} [(1-a(\mathcal{D}'\ | \ \boldsymbol{q}))(C_r-h(\boldsymbol{\lambda}))]
\end{align*}
where
\allowdisplaybreaks\small\begin{align*}
&{\mathbb{E}}_{\mathcal{D}'}{\mathbb{E}}_{\boldsymbol{\theta}\ | \ \mathcal{D}'} [(1-a(\mathcal{D}'\ | \ \boldsymbol{q}))(C_r-h(\boldsymbol{\lambda}))]\\
=&\mathop{\sum}_{\substack{\boldsymbol{d}\\\sum_{ij}d_{ij}=r}}~~\int_{w_1}\int_{w_2}[1-a(w_1,w_2,\boldsymbol{d}\ | \ \boldsymbol{q})]{\mathbb{E}}_{\boldsymbol{\theta}\ | \ w_1,w_2,\boldsymbol{d}}[C_r-h(\boldsymbol{\lambda})]~f_{(W_1,W_2,\boldsymbol{D})}(w_1,w_2,\boldsymbol{d})~dw_1~dw_2\\
=&\mathop{\sum}_{\substack{\boldsymbol{d}\\\sum_{ij}d_{ij}=r}}~~\int_{w_1}\int_{w_2}[1-a(w_1,w_2,\boldsymbol{d}\ | \ \boldsymbol{q})]\left\{C_r-\int_{\boldsymbol{\theta}}h(\boldsymbol{\lambda})~p( \boldsymbol{\theta}\ | \  w_1,w_2,\boldsymbol{d}) ~d\boldsymbol{\theta}\right\}~f_{(W_1,W_2,\boldsymbol{D})}(w_1,w_2,\boldsymbol{d})~dw_1~dw_2\\
=&\mathop{\sum}_{\substack{\boldsymbol{d}\\\sum_{ij}d_{ij}=r}}~~\int_{w_1}\int_{w_2}[1-a(w_1,w_2,\boldsymbol{d}\ | \ \boldsymbol{q})][C_r-\varphi(w_1,w_2,\boldsymbol{d})]~f_{(W_1,W_2,\boldsymbol{D})}(w_1,w_2,\boldsymbol{d})~dw_1~dw_2,
\end{align*}
with
\normalsize\begin{align*}
    f_{(W_1,W_2,\boldsymbol{D})}(w_1,w_2,\boldsymbol{d})=\int_{\boldsymbol{\theta}} f_{(W_1,W_2,\boldsymbol{D})\ | \ \boldsymbol{\theta}}(w_1,w_2,\boldsymbol{d})~p(\boldsymbol{\theta})~d\boldsymbol{\theta},
\end{align*} 
and 
\begin{align*}
    \varphi(w_1,w_2,\boldsymbol{d})=\int_{\boldsymbol{\theta}}h(\boldsymbol{\lambda})~p( \boldsymbol{\theta}\ | \  w_1,w_2,\boldsymbol{d}) ~d\boldsymbol{\theta}.
\end{align*}
To obtain Bayes decision, we need to minimize $R_1(\boldsymbol{q},a)$ with respect to $a(\cdot\ | \ \boldsymbol{q})$, which is equivalent to minimization of ${\mathbb{E}}_{\boldsymbol{\theta}}{\mathbb{E}}_{\mathcal{D}'\ | \ \boldsymbol{\theta}} [(1-a(\mathcal{D}'\ | \ \boldsymbol{q}))(C_r-h(\boldsymbol{\lambda}))]$ with respect to $a(\cdot\ | \ \boldsymbol{q})$.
Toward this end, we consider two cases:\\
\textbf{Case 1: } $\varphi(w_1,w_2,\boldsymbol{d})\leq C_r$:
\begin{align*}
 {\mathbb{E}}_{ \boldsymbol{\theta}\ | \ (w_1,w_2,\boldsymbol{d})} [(1-a(\mathcal{D}'\ | \ \boldsymbol{q}))(C_r-h(\boldsymbol{\lambda}))]=\begin{dcases*}0 &if $a(w_1,w_2,\boldsymbol{d}\ | \ \boldsymbol{q})=1$\\
 \geq 0   &if $a(w_1,w_2,\boldsymbol{d}\ | \ \boldsymbol{q})=0$
    \end{dcases*}.
\end{align*}
\textbf{Case 2: }$\varphi(w_1,w_2,\boldsymbol{d})> C_r$:
\begin{align*}
  {\mathbb{E}}_{ \boldsymbol{\theta}\ | \ (w_1,w_2,\boldsymbol{d})} [(1-a(\mathcal{D}'\ | \ \boldsymbol{q}))(C_r-h(\boldsymbol{\lambda}))]=\begin{dcases*}0 &if $a(w_1,w_2,\boldsymbol{d}\ | \ \boldsymbol{q})=1$\\
 < 0   &if $a(w_1,w_2,\boldsymbol{d}\ | \ \boldsymbol{q})=0$
    \end{dcases*}.
\end{align*}
Therefore, for each fixed $\boldsymbol{q}$, if we take
\begin{align*}
 a(w_1,w_2,\boldsymbol{d}\ | \ \boldsymbol{q}) =\begin{dcases*}
    1&  when  $ C_r- \varphi(w_1,w_2,\boldsymbol{d})\geq 0$\\
     0 &when $ C_r- \varphi(w_1,w_2,\boldsymbol{d})<0$,
    \end{dcases*}, 
\end{align*} 
then ${\mathbb{E}}_{\boldsymbol{\theta}}{\mathbb{E}}_{\mathcal{D}'\ | \ \boldsymbol{\theta}} [(1-a(\mathcal{D}'\ | \ \boldsymbol{q}))(C_r-h(\boldsymbol{\lambda}))]$ minimizes with respect to $a(\cdot\ | \ \boldsymbol{q})$.
Thus, for fixed $\boldsymbol{q}$, the Bayes decision function $a_B(\mathcal{D}'\ | \ \boldsymbol{q})$ is given by
\begin{align*}
    a_B(w_1,w_2,\boldsymbol{d}\ | \ \boldsymbol{q})=\begin{cases}
        1 &\text{ if } C_r-\varphi(w_1,w_2,\boldsymbol{d})\geq 0\\
        0 &\text{ otherwise}
    \end{cases}.
\end{align*}
Now, we provide an alternative form for the Bayes decision function, which is useful to calculate the Bayes risk. To develop an alternative form of the Bayes decision function, we consider the following theorem.
\begin{theorem} If function $h(\boldsymbol{\lambda})$ is increasing (decreasing) in $\lambda$, then the posterior expectation of $h(\boldsymbol{\lambda})$, given by $\phi(w_1,w_2,\boldsymbol{d})$, satisfies the following monotonicity property:
    \begin{enumerate}
\item For fixed $(w_2,\boldsymbol{d})$, $\phi(w_1,w_2,\boldsymbol{d})$ is decreasing (increasing) in $w_1$.
\item For fixed $(w_1,\boldsymbol{d})$, $\phi(w_1,w_2,\boldsymbol{d})$ is decreasing (increasing) in $w_2$.
    \end{enumerate}
\end{theorem}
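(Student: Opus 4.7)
The plan is to write $\varphi(w_1,w_2,\boldsymbol{d})$ as a posterior expectation and differentiate with respect to the sufficient statistics. From the likelihood in \eqref{e1} and the product prior $p(\boldsymbol{\theta})=\prod_{j=1}^{J}p_{1j}(\lambda_j)\,p_{2j}(\phi_j)$, the posterior satisfies
\[
p(\boldsymbol{\theta}\mid w_1,w_2,\boldsymbol{d})\;\propto\;\prod_{j=1}^{J}\lambda_j^{d_{+j}}\phi_j^{\delta d_{2j}}\exp\!\left[-\lambda_j\bigl(w_1+\phi_j^{\delta}w_2\bigr)\right]p_{1j}(\lambda_j)\,p_{2j}(\phi_j),
\]
so $w_1$ enters only through the factor $\exp(-\Lambda w_1)$ with $\Lambda=\sum_{j=1}^{J}\lambda_j$, and $w_2$ only through $\exp(-\Lambda^{\ast}w_2)$ with $\Lambda^{\ast}=\sum_{j=1}^{J}\lambda_j\phi_j^{\delta}$. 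A standard log-derivative-of-an-MGF calculation then gives
\[
\frac{\partial\varphi}{\partial w_1}=-\mathrm{Cov}_{\mathrm{post}}\!\bigl(h(\boldsymbol{\lambda}),\,\Lambda\bigr),\qquad \frac{\partial\varphi}{\partial w_2}=-\mathrm{Cov}_{\mathrm{post}}\!\bigl(h(\boldsymbol{\lambda}),\,\Lambda^{\ast}\bigr),
\]
where the covariances are taken under the posterior. The monotonicity statement therefore reduces to showing that both posterior covariances are non-negative when $h$ is coordinate-wise increasing (and non-positive when $h$ is coordinate-wise decreasing, by applying the same argument to $-h$).

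The second key observation is that the displayed posterior factorises across $j$: the blocks $(\lambda_1,\phi_1),\ldots,(\lambda_J,\phi_J)$ are mutually independent under $p(\boldsymbol{\theta}\mid w_1,w_2,\boldsymbol{d})$, because both the prior and the $\boldsymbol{\theta}$-dependent part of the likelihood split into a product over $j$. Now $h(\boldsymbol{\lambda})$ is coordinate-wise non-decreasing in each $\lambda_j$ (and trivially non-decreasing in each $\phi_j$), while $\Lambda$ and $\Lambda^{\ast}$ are also non-decreasing in each block $(\lambda_j,\phi_j)$ because $\lambda_j,\phi_j>0$ and $\delta\in\{0,1\}$. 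Applying the Harris--FKG covariance inequality for independent random vectors to the two coordinate-wise non-decreasing functionals $h(\boldsymbol{\lambda})$ and $\Lambda$ (respectively $\Lambda^{\ast}$) on the product posterior yields $\mathrm{Cov}_{\mathrm{post}}(h,\Lambda)\ge 0$ and $\mathrm{Cov}_{\mathrm{post}}(h,\Lambda^{\ast})\ge 0$, which combined with the derivative formulas completes both monotonicity claims; the decreasing case is identical after replacing $h$ by $-h$.

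The main obstacle I anticipate is not the differentiation (routine, modulo mild integrability/dominated convergence checks inherited from the exponential-family structure and $w_1,w_2\ge 0$) but rather setting up the covariance representation cleanly and verifying the hypotheses of Harris--FKG in the right coordinates. In particular, one must be careful that when $\delta=1$ the factor $\phi_j^{\delta}$ introduces a coupling of $\lambda_j$ and $\phi_j$ inside $\Lambda^{\ast}$, and that the monotonicity of $\Lambda^{\ast}$ must be verified jointly in $(\lambda_j,\phi_j)$ rather than marginally in $\lambda_j$ alone. Once the block-independence of the posterior and the joint monotonicity of $\Lambda^{\ast}$ are in hand, the rest of the argument is bookkeeping.
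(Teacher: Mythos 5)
Your setup is clean and the differentiation step is correct: writing $\varphi$ as a ratio of integrals shows
\begin{align*}
\frac{\partial\varphi}{\partial w_1}=-\mathrm{Cov}_{\mathrm{post}}\bigl(h(\boldsymbol{\lambda}),\Lambda\bigr),\qquad
\frac{\partial\varphi}{\partial w_2}=-\mathrm{Cov}_{\mathrm{post}}\bigl(h(\boldsymbol{\lambda}),\Lambda^{*}\bigr),
\end{align*}
with $\Lambda=\sum_j\lambda_j$ and $\Lambda^{*}=\sum_j\lambda_j\phi_j^{\delta}$, and your treatment of $w_1$ is sound: since $h$ and $\Lambda$ depend only on $\boldsymbol{\lambda}$, and the marginal posterior of $\boldsymbol{\lambda}$ factorises into independent one--dimensional laws, Harris' inequality applies directly. (For what it is worth, the paper does not prove the theorem at all --- it defers to Theorem~1 of the cited das2024bayesian --- so your attempt is a genuine addition rather than a restatement.)

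The gap is in the $w_2$ case with $\delta=1$. Harris--FKG for a product measure requires each factor to be one--dimensional, or else each block measure to be \emph{associated}. Here the blocks $(\lambda_j,\phi_j)$ are two--dimensional and, because of the coupling term $\exp(-\lambda_j\phi_j w_2)$, the within-block posterior is log-\emph{sub}modular ($\partial^2(-\lambda\phi w_2)/\partial\lambda\,\partial\phi=-w_2<0$), i.e.\ $\lambda_j$ and $\phi_j$ are negatively dependent. You flag the coupling but propose the wrong remedy: checking that $\Lambda^{*}$ is jointly monotone in $(\lambda_j,\phi_j)$ does not help, because what fails is the association of the block measure, not the monotonicity of the functional. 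Concretely, conditioning on $\boldsymbol{\lambda}$ reduces the within-block contribution to $\mathrm{Cov}\bigl(h,\lambda_j\,\mathbb{E}[\phi_j\mid\lambda_j]\bigr)$, and since $\mathbb{E}[\phi_j\mid\lambda_j]$ is \emph{decreasing} in $\lambda_j$, the required monotonicity of $\lambda_j\mathbb{E}[\phi_j\mid\lambda_j]$ is not automatic and your argument does not establish it. The fix is a change of variables: set $\mu_j=\lambda_j\phi_j$, so that the likelihood factor becomes $\lambda_j^{d_{1j}}e^{-\lambda_j w_1}\cdot\mu_j^{d_{2j}}e^{-\mu_j w_2}$ (separable) and, for the uniform prior on $\phi_j$, the prior contributes the indicator of $\{\lambda_j<\mu_j<l_j\lambda_j\}$, which is a sublattice of $\mathbb{R}^2$; the transformed block density is then MTP$_2$, FKG applies to the increasing functionals $h$ and $\Lambda^{*}=\sum_j\mu_j$, and both covariances are non-negative. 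Note, however, that this repair uses properties of the specific prior $p_{2j}$; for an arbitrary prior on $\phi_j$ the log-supermodularity in $(\lambda_j,\mu_j)$ --- and hence the theorem as you would prove it --- needs an additional hypothesis.
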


\indent  The above theorem can be proved following the steps as in Theorem 1 of \citet{das2024bayesian}.\\

\indent For fixed $(w_2=0,\boldsymbol{d})$, $\phi(w_1,0,\boldsymbol{d})$ is a decreasing function in $w_1$. Therefore, there exist a unique point $c(\boldsymbol{d})$ such that
\begin{align*}
        \phi(w_1,0,\boldsymbol{d})>C_r &~~~~~ \text{for } w_1<c(\boldsymbol{d})\\
        \phi(w_1,0,\boldsymbol{d})<C_r &~~~~~ \text{for } w_1>c(\boldsymbol{d}).
\end{align*}
Since, for fixed $(w_1=c(\boldsymbol{d}),\boldsymbol{d})$, $\phi(c(\boldsymbol{d}),w_2,\boldsymbol{d})$ is decreasing function in $w_2$. Therefore
\begin{align*}
        \phi(w_1,w_2,\boldsymbol{d})<C_r &~~~~~\text{for }w_2>0 ~\text{and }  w_1>c(\boldsymbol{d}).
\end{align*}
For $0<w_1<c(\boldsymbol{d})$, for fixed $(w_1,\boldsymbol{d})$, $\phi(w_1,w_2,\boldsymbol{d})$ is decreasing function in $w_2$. Therefore, there exists a unique point $c_1(w_1,\boldsymbol{d})$ such that 
\begin{align*}
        \phi(w_1,w_2,\boldsymbol{d})<C_r &~~~~ \text{for } w_2>c_1(w_1,\boldsymbol{d})~\text{and } 0<w_1<c(\boldsymbol{d}) \\
        \phi(w_1,w_2,\boldsymbol{d})>C_r &~~~~\text{for } 0<w_2<c_1(w_1,\boldsymbol{d})~\text{and } 0<w_1<c(\boldsymbol{d}).
\end{align*} 
Since the upper limit of $w_1$ is $n\tau_1$, we define $c'(\boldsymbol{d})=\min\{c(\boldsymbol{d}),n\tau_1\}$.
Therefore, the Bayes decision function is 
\begin{align*}
    \delta_B=\begin{dcases*}
        1 & for $w_1>c'(\boldsymbol{d})$ or $0<w_1<c'(\boldsymbol{d})$ and $w_2>c_1(w_1,\boldsymbol{d})$\\
        0 & for $0<w_1<c'(\boldsymbol{d})$ and $0<w_2<c_1(w_1,\boldsymbol{d})$
    \end{dcases*}.
\end{align*}

\subsection{Bayes Risk}
We now derive the explicit Bayes risk function using the Bayes decision function developed above.  Then, (\ref{eq:r1}) can be re-written as
\allowdisplaybreaks
\small\begin{align}\label{R1d}
R_1(\boldsymbol{q},a)=&{\mathbb{E}}_{\boldsymbol{\theta}}[h(\boldsymbol{\lambda})]+\mathop{\sum}_{\substack{\boldsymbol{d}\\\sum_{ij}d_{ij}=r}}~~\int_{\boldsymbol{\theta}}[C_r-h(\boldsymbol{\lambda})]P(W_1<c'(\boldsymbol{d}), W_2<c_1(W_1,\boldsymbol{d}),\boldsymbol{D}=\boldsymbol{d}\ | \ \boldsymbol{\theta})~p(\boldsymbol{\theta})~d\boldsymbol{\theta}\nonumber\\
=&{\mathbb{E}}_{\boldsymbol{\theta}}[h(\boldsymbol{\lambda})]+\mathop{\sum}_{\substack{\boldsymbol{d}\\\sum_{ij}d_{ij}=r}}~~\int\limits_{w_2=0}^{c(\boldsymbol{d})}\int\limits_{w_1=0}^{c_1(w_2,\boldsymbol{d})}\int_{\boldsymbol{\theta}}[C_r-h(\boldsymbol{\lambda}))]f_{(W_1,W_2,\boldsymbol{d})}(w_1,w_2,\boldsymbol{d}\ | \ \boldsymbol{\theta})p(\boldsymbol{\theta})~dw_1~dw_2~d\boldsymbol{\theta}\nonumber\\  
=&{\mathbb{E}}_{\boldsymbol{\theta}}[h(\boldsymbol{\lambda})]+\mathop{\sum}_{\substack{\boldsymbol{d}\\\sum_{ij}d_{ij}=r}}~~H(\boldsymbol{d}),
\end{align}
\normalsize where 
\begin{align*}   H(\boldsymbol{d})=\int\limits_{w_2=0}^{c_1(\boldsymbol{d})}\int\limits_{w_1=0}^{c_1(w_1,\boldsymbol{d})}\int_{\boldsymbol{\theta}}[C_r-h(\boldsymbol{\lambda}))]f_{(W_1,W_2,\boldsymbol{d})}(w_1,w_2,\boldsymbol{d}\ | \ \boldsymbol{\theta})p(\boldsymbol{\theta})~dw_1~dw_2~d\boldsymbol{\theta}.\\  
\end{align*}
It is easy to understand that we further need to find he expressions of the ${\mathbb{E}}_{\boldsymbol{\theta}}[\mathbb{E}[\delta(n-D_1)\ | \ \boldsymbol{\theta}]]$, ${\mathbb{E}}_{\boldsymbol{\theta}}[\mathbb{E}[T^{(r)}\ | \ \boldsymbol{\theta}]]$ and $f_{(W_1,W_2,\boldsymbol{d})}(w_1,w_2,\boldsymbol{d}\ | \ \boldsymbol{\theta})$ in order to compute $R_B(\mathbf{q},a)$ given by (\ref{eq:bayesrisk}). These expressions are derived in \ref{appa}, \ref{appb}, and \ref{appc}, respectively.

\section{Bayes Decision Function and Bayes Risk for Quadratic Loss Function}\label{de}
\indent In this section, we consider a quadratic loss function and derive the expressions for the Bayes decision function and the corresponding Bayes risk, primarily for the purpose of illustration. As is standard in the literature, we consider Gamma priors for $\lambda_i$, with p.d.f. 
\begin{align}
p_{1i}(\lambda_i)=\frac{\beta_i^{\alpha_i}}{\Gamma(\alpha_i)}\lambda_i^{\alpha_i-1}\exp(-\beta_i\lambda_i),~~~\lambda_i>0,~\alpha_i>0,~ \beta_i>0,
\end{align}
for $i=1 \ldots J$. 
and $\phi_i$ follows uniform distribution of $(1,l_i)$  with pdf
\begin{align}
    p_{2i}(\phi_i)=\frac{1}{l_i-1}, ~~~~~1<\phi_i<l_i, ~i=1,2.
\end{align}
for $i=1 \ldots J$.
 The loss function is taken as a quadratic loss function
$h(\boldsymbol{\lambda})=a_{0}+\sum_{j=1}^J a_j\lambda_j+\substack{\sum_{i=1}^J\sum_{j=1}^J}_{i\leq j}a_{ij}\lambda_i\lambda_j$ as considered \citet{prajapati2023optimal}. For the designing parameter $\boldsymbol{q}=(n,r,m,\tau_1)$, we derive the Bayes decision function.

\begin{result}
    \begin{align*}
  &\prod_{j=1}^J\left(\int_{\lambda_j=0}^\infty\int_{\phi_j=1}^{l_j}\lambda_j^{d_{+j}+\alpha_j+p_j}\phi^{d_{2j}\delta}\exp\left[-\lambda_j \left(w_1+\phi_j^\delta w_2+\beta_j\right)\right]d\lambda_j~d\phi_j\right)\\
=&\prod_{j=1}^J\left(\int_1^{l_1}\phi_j^{\delta d_{2j}}\frac{\Gamma(\alpha_j+d_{1j}+d_{2j}+p_j)}{\left(w_1+\phi_j^\delta w_2+\beta_j\right)^{(\alpha_j+d_{1j}+d_{2j}+p_j)}}~d\phi_j\right)
    \end{align*}
    it is denoted by $H_{1}(w_1,w_2,\boldsymbol{d},\boldsymbol{p})$,
\end{result}
where $\boldsymbol{p}=(p_1,\ldots,p_J)$. If ${\delta}=0$, 
\begin{align*}
H_1(w_1,w_2,\boldsymbol{d},\boldsymbol{p})=\prod_{j=1}^J\frac{\Gamma(\alpha_j+d_{1j}+d_{2j}+p_j)(l_i-1)}{\left(w_1+w_2+\beta_j\right)^{(\alpha_j+d_{1j}+d_{2j}+p_j)}}
\end{align*}
and if ${\delta}=1$,
\begin{align*}
H_1(w_1,w_2,\boldsymbol{d},\boldsymbol{p})=&\prod_{j=1}^J\left(\int_1^{l_j}\phi_j^{d_{2j}}\frac{\Gamma(p_j+d_{1j}+d_{2j}+\alpha_j)}{(w_1+\phi_j w_2+\beta_j)^{p_j+d_{1j}+d_{2j}+\alpha_j}}~d\phi_j\right)\\
=&\prod_{j=1}^J\left(\int_1^{l_j}\phi_j^{d_{2j}}\frac{\Gamma(p_j+d_{1j}+d_{2j}+\alpha_j)}{(w_1+\beta_j)^{p_j+d_{1j}+d_{2j}+\alpha_j}(1+\phi_j w_2/(w_1+\beta_j))^{p_j+d_{1j}+d_{2j}+\alpha_j}}~d\phi_j\right)\\
=&\prod_{j=1}^J\left(\int_{w_2/(w_1+\beta_j)}^{w_2l_j/(w_1+\beta_j}\frac{\Gamma(p_j+d_{1j}+d_{2j}+\alpha_j)}{(w_1+\beta_j)^{p_j+d_{1j}+\alpha_j-1}w_2^{d_{2j}+1}}\frac{z^{d_{2j}}}{(1+z)^{p_j+d_{1j}+d_{2j}+\alpha_j}}~d\phi_j\right)\\
=&\prod_{j=1}^J\frac{\Gamma(p_j+d_{1j}+d_{2j}+\alpha_j)}{(w_1+\beta_j)^{p_j+d_{1j}+\alpha_j-1}w_2^{d_{2j}+1}}\\
&~~~~~~\times[I_{\zeta_{1j}}(d_{2j}+1,p_j+d_{1j}+\alpha_j-1)-I_{\zeta_{2j}}(d_{2j}+1,p_j+d_{1j}+\alpha_j-1)],
\end{align*}
where $\zeta_{ij}=\eta_{ij}/(1+\eta_{ij})$, for $i=1,2$ with $\eta_{1j}=w_2l_j/(w_1+\beta_j)$ and $\eta_{2j}=w_2/(w_1+\beta_j)$, $j=1,2.$ $B_\eta(m,b)$ is the incomplete beta function, $I_\eta(m,b)$ is the cdf of beta function. $B_\eta(m,b)$ and $I_\eta(m,b)$ are defined as
\begin{align*}
    B_\eta(m,b)=\int_0^\eta x^{m-1}(1-x)^{b-1} dy
\end{align*}
and 
\begin{align*}
   I_\eta(m,b)=\frac{B_\eta(m,b)}{B(m,b)} 
\end{align*}
respectively. 
Using the result, the $\varphi(\mathcal{D}')$ can be written as
\begin{align}
 \varphi(\mathcal{D}')=a_0+  \sum_{j=0}^J a_{j}\frac{H_{1}(w_1,w_2,\boldsymbol{d},\boldsymbol{p}_j)}{H_{1}(w_1,w_2,\boldsymbol{d},\boldsymbol{0})}+\underset{i\leq j}{\sum_{i=1}^J\sum_{i=1}^J}a_{ij}\frac{H_{1}(w_1,w_2,\boldsymbol{d},\boldsymbol{p}_{ij})}{H_{1}(w_1,w_2,\boldsymbol{d},\boldsymbol{0})},
\end{align}
where $\boldsymbol{p}_j$ is the $J$ component vector whose $j$-th component is 1 and other components are 0, $\boldsymbol{p}_{ij}$  is the $J$ component vector whose $i$-th and $j$-th components are 1 when $i\neq j$, 2 when $i=j$  and other components are 0. Therefore, the Bayes decision is 
\begin{align*}
    a_B(w_1,w_2,\boldsymbol{d}\ | \ \boldsymbol{q})=\begin{dcases*}
        1& if $\varphi(\mathcal{D}') \leq C_r$\\
        0 &otherwise.
    \end{dcases*}
\end{align*}

The explicit form of $R_1$ is given in the appendix \ref{appd}.
\section{Finding the optimal AABSP}\label{algorithm} We consider the following algorithm to find the optimal AABSP.
 The Bayes risk is a function of three discrete variables $n,r,m$ and one continuous variable $\tau_1$. The solution cannot be obtained analytically. We provide graph of Bayes risk with respect to $\tau_1$ for $n=4,r=3$ and $m=0,1,2,3$ in Figure \ref{fig:enter-label} for the hyperparameter $\alpha_{1}=2.21$, $\alpha_{2}=9.59$, $\beta_{1}=100$, $\beta_{2}=100$, $l_1=109.202$ and $l_2=11.716$ and the cost components  $c_t=0.3$, $c_s=0.6$, $v_s=0.3$, $c_a=0.1$, $c_r=80$.
 \begin{figure}[hbt!]
     \centering
     \includegraphics[width=0.49\linewidth]{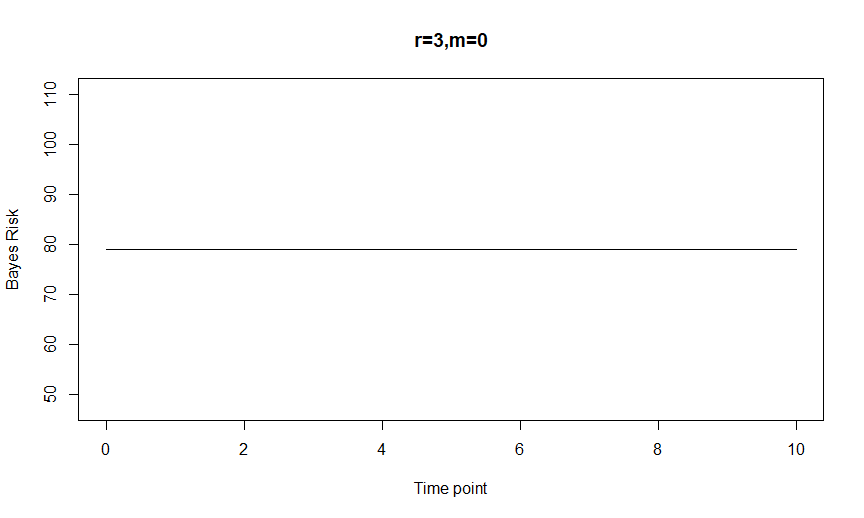} \includegraphics[width=0.49\linewidth]{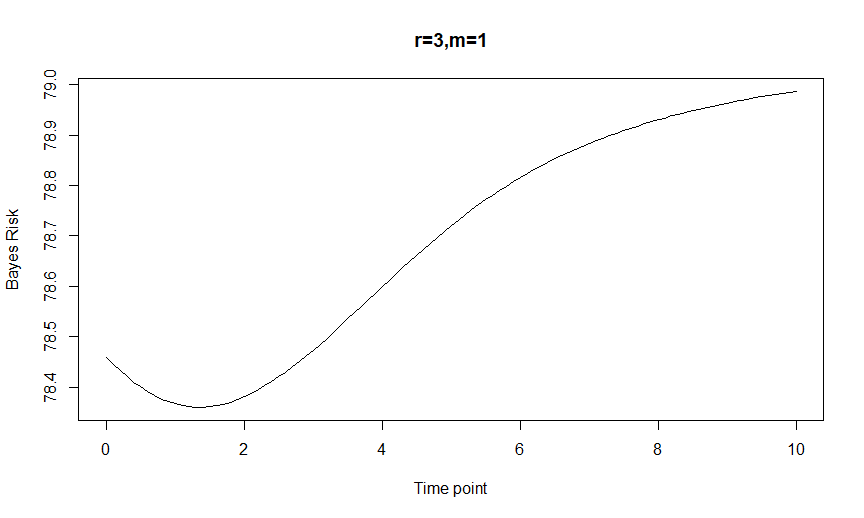} \\ \includegraphics[width=0.49\linewidth]{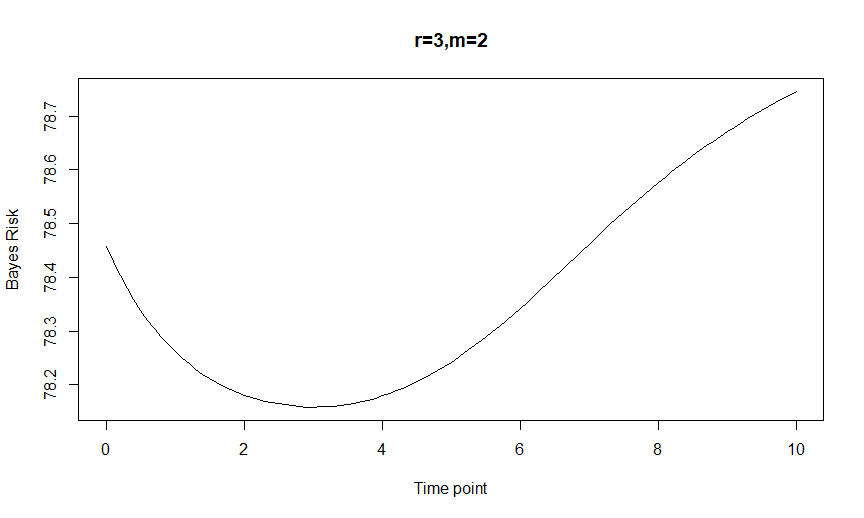} \includegraphics[width=0.49\linewidth]{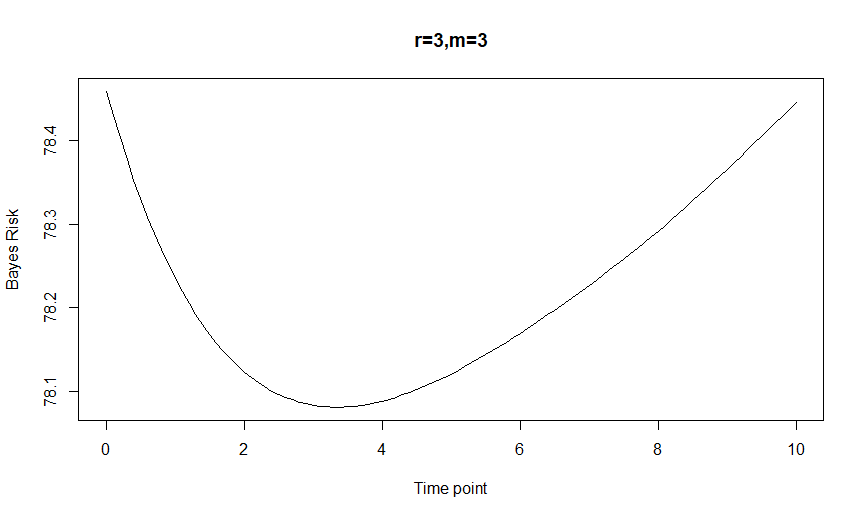}
     \caption{Graphical plot Bayes risk vs time point for different value of $m$}
     \label{fig:enter-label}
 \end{figure}
 
 In Figure \ref{fig:enter-label}, it is noted that for $m=0$, the line is parallel. This is due to the fact that for $m=0$, AABSP is converted to CBSP. In CBSP, the optimal design does not depend on the stress-changing time point $\tau_1$. So when $m=0$, we can take $\tau_1=0$. For $m>0$, in Figure \ref{fig:enter-label}, it is seen that for fixed $n,r,m$, the Bayes risk has a unique minimum. Note that the optimal value $r_B$ is less than or equal to $n_B$ and the optimal value $m_B$ is less than or equal to $r_B$. Therefore, we can say that the upper bound of $r_B$ and $m_B$ depends on the $n_B$. Next, we provide an upper bound for $n_B$.
 Let $\boldsymbol{q}_B=(n_B,t_{1B},t_{2B},m_B)$ be the optimal AABSP. Since all costs are positive and $C_s>v_s\geq 0$, all terms in $R_B(\boldsymbol{q}_B)$ are positive. So we get,
 \begin{align}\label{i1}
     R_B(\boldsymbol{q}_B,a_B)\geq n_B(C_s-v_s)
 \end{align}
 Let $(0,0,0,0)$ denote the no-sampling case, i.e., when we take the decision without life-testing. If the lot is rejected, the Bayes risk is $R(0,0,0,0)=C_r$ and if the lot is accepted, the Bayes risk is $R(0,0,0,0)={\mathbb{E}}_{\boldsymbol{\lambda}}[h(\boldsymbol{\lambda})]$ . Therefore
 $R(0,0,0,0)=\min\{{\mathbb{E}}_{\boldsymbol{\lambda}}[h(\boldsymbol{\lambda})], C_r\}$. Therefore,
 \begin{align}\label{i2}
     R_B(\boldsymbol{q}_B,a_B)\leq \min\{{\mathbb{E}}_{\boldsymbol{\lambda}}[h(\boldsymbol{\lambda})], C_r\}.
 \end{align}
 From (\ref{i1}) and (\ref{i2}), we get the upper bound of $n_B$, $$n_0=\left\lfloor\frac{\min\{{\mathbb{E}}_{\boldsymbol{\lambda}}[h(\boldsymbol{\lambda})], C_r\}}{C_s-v_s}\right\rfloor,$$ where $\lfloor x\rfloor$ is the greatest integer less than or equal to $x$. Also, we know that the lower bounds of $n_B,r_B,m_B$ are 0. Therefore, we can say that we get a finite value of $\boldsymbol{q}_B=(n_B,r_{B},m_B,\tau_{1B})$. It is noted that for $m=0$, $\tau_{1}=0$. We now provide Algorithm A to obtain optimal AABSP.\\
\textbf{Algorithm A}
\begin{enumerate}
    \item For each parameter $\boldsymbol{q}=(n, r, m,\tau_1)$, derive the Bayes decision function $a_B(. \ | \ \boldsymbol{q})$ to minimize $R_B(\boldsymbol{q})$ among all classes of decision functions $a(. \ | \ \boldsymbol{q})$.  The derivation is discussed in Section \ref{Bayes decision}.
    \item When $m>0$, for each pair of values of $(n,r,m)$, minimize the Bayes risk with respect to $\tau_1$. Let $\tau_{1B}(n,r(n),m(r(n))$ be the optimal values of $\tau_1$.
    \begin{align*}
       &R_B(n,r(n),m(r(n)),\tau_{1B}(n,r(n),m(r(n))),a_B(. \ | \ (n,r(n),m(r(n)),\tau_{1B}(n,r(n),m(r(n))))))\\
       =&\min_{\tau_1\geq 0} R_B(n,r(n),m(r(n)),\tau_1(n,r(n),m(r(n))),a_B(. \ | \ (n,r(n),m(r(n)),,\tau_1(n,r(n),m(r(n)))))) 
    \end{align*}
    When $m=0$, we take $\tau_{1B}(n,r(n),m(r(n))=0$.
    \item For each pair of value $(n,r)$, find find an integer $m_B(n,r(n))$ which minimizes the equation (\ref{eq:bayesrisk}) with respect to $m$
  \small  \begin{align*}
       & R_B(n,r(n),m_B(r(n)),\tau_{1B}(n,r(n),m_B(r(n))),a_B(. \ | \ (n,r(n),m_B(r(n)),\tau_{1B}(n,r(n),m_B(r(n))))))\\
        =&\min_{0\leq m\leq r} R_B(n,r(n),m(r(n)),\tau_{1B}(n,r(n),m(r(n))),a_B(. \ | \ (n,r(n),m(r(n)),\tau_{1B}(n,r(n),m(r(n))))))
    \end{align*}
 \normalsize   \item For each value of $n$, find an integer $r_B(n)$ which minimizes the equation (\ref{eq:bayesrisk}) with respect to $r$
   \small \begin{align*}
       & R_B(n,r_B(n),m_B(r_B),\tau_{1B}(n,r_B(n),m_B(r_B(n))),a_B(. \ | \ (n,r_B(n),m_B(r_B(n)),\tau_{1B}(n,r_B(n),m(r_B(n))))))\\
        =&\min_{1\leq r\leq n} R_B(n,r(n),m_B(r(n)),\tau_{1B}(n,r(n),m_B(r(n))),a_B(. \ | \ (n,r_B(n),m_B(r(n)),\tau_{1B}(n,r(n),m(r(n))))))
    \end{align*}
  \normalsize  \item Finally, we find $n_B$ which minimize the equation (\ref{eq:bayesrisk})
  \begin{align*}
       & R_B(n_B,r_B(n_B),m_B(r_B(n_B)),\tau_{1B}(n_B,r_B(n_B),m_B(r_B(n_B))),\\
       &~~~~~~a_B(. \ | \ (n_B,r_B(n_B),m_B(r_B(n_B)),\tau_{1B}(n_B,r_B(n_B),m(r_B(n_B))))))\\
        =&\min_{0<n\leq n_0} R_B(n,r_B(n),m_B(r_B(n)),t_{1B}(n,r_B(n),m_B(r_B(n))),\\
        &~~~~~~~~~~~~~~~~~a_B(. \ | \ (n,t_{1B}(n,r_B(n),m(r_B(n))),r_B(n),m_B(r_B(n)))))
    \end{align*}
    \end{enumerate}
\normalsize Now, we take $a_B(. \ | \ (n_B,r_B(n_B),m_B(r_B(n_B)),\tau_{1B}(n_B,r_B(n_B),m(r_B(n_B))))=a_B$, $m_B(r_B(n_B))=m_B$, $r_B(n_B)=r_B$ and $\tau_{1B}(n_B,r_B(n_B),m_B(r_B(n_B)))=\tau_{1B}$.  Therefore, $(\boldsymbol{q}_B,a_B)=(n_B,r_B,m_{B},\tau_{1B},a_B)$ is the optimal sampling plan. 
    \begin{theorem}
        The sampling plan $(n_B,r_B,m_B,\tau_{1B},a_B)$ is the optimal AABSP
    \end{theorem}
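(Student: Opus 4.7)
The plan is to argue that Algorithm A performs an exhaustive nested minimization of the Bayes risk over the admissible class $\mathcal{C}$, and that the resulting quintuple attains the infimum. The proof decomposes naturally into three parts: (i) justify optimality of $a_B$ for each fixed $\boldsymbol{q}$, (ii) justify that the search over $\boldsymbol{q}=(n,r,m,\tau_1)$ can be reduced to a finite union of one-dimensional minimizations in $\tau_1$, and (iii) invoke the standard sequential minimization identity to conclude that the composed optimum is the joint optimum.

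First I would recall that by the derivation of Section~\ref{Bayes decision}, for every fixed design $\boldsymbol{q}$ the decision function $a_B(\cdot \mid \boldsymbol{q})$ minimizes $R_1(\boldsymbol{q},a)$ over all measurable decision rules. Since $R_B(\boldsymbol{q},a)$ depends on $a$ only through $R_1(\boldsymbol{q},a)$ (the other summands in \eqref{eq:bayesrisk} are independent of $a$), it follows that
\begin{equation*}
R_B(\boldsymbol{q},a_B(\cdot\mid \boldsymbol{q}))=\min_{a}R_B(\boldsymbol{q},a) \quad \text{for every admissible }\boldsymbol{q}.
\end{equation*}
This reduces the joint minimization over $(\boldsymbol{q},a)$ to a minimization over $\boldsymbol{q}$ alone of the profiled risk $\widetilde{R}_B(\boldsymbol{q}):=R_B(\boldsymbol{q},a_B(\cdot\mid\boldsymbol{q}))$.

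Next I would establish that the search for $\boldsymbol{q}$ is effectively bounded. Combining the inequalities \eqref{i1} and \eqref{i2} shows $n_B\le n_0=\lfloor \min\{\mathbb{E}_{\boldsymbol{\lambda}}[h(\boldsymbol{\lambda})],C_r\}/(C_s-v_s)\rfloor$; the admissibility constraints $0\le m\le r\le n$ then bound $r$ and $m$ once $n$ is fixed. Thus the discrete part of the search lives in a finite set. For each $(n,r,m)$ with $m\ge 1$, continuity of $\widetilde{R}_B$ in $\tau_1\in[0,\infty)$ together with the fact that $\widetilde{R}_B(\boldsymbol{q})\to\infty$ (or is non-decreasing in $\tau_1$ past a finite point, via the $C_t\mathbb{E}[T^{(r)}]$ term) guarantees that the infimum over $\tau_1$ is attained at some $\tau_{1B}(n,r,m)$; and when $m=0$ the adaptive indicator $\delta$ is identically zero, so $\widetilde{R}_B$ does not depend on $\tau_1$ and the convention $\tau_{1B}=0$ is harmless.

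Finally I would apply the elementary identity
\begin{equation*}
\min_{n,r,m,\tau_1}\widetilde{R}_B(n,r,m,\tau_1)=\min_{n}\min_{r\le n}\min_{m\le r}\min_{\tau_1\ge 0}\widetilde{R}_B(n,r,m,\tau_1),
\end{equation*}
which holds because all the indexing sets are nonempty and the outer minima are attained thanks to the finite discrete ranges. Steps~2--5 of Algorithm A realize exactly these four nested minimizations in reverse order, producing $\tau_{1B}$, then $m_B$, then $r_B$, and finally $n_B$. Combining with step~1 (the optimality of $a_B$ shown above) yields $R_B(\boldsymbol{q}_B,a_B)=\min_{(\boldsymbol{q},a)\in\mathcal{C}}R_B(\boldsymbol{q},a)$, which is the claim.

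The main obstacle I anticipate is the justification that the inner minimum over $\tau_1$ is attained for every $(n,r,m)$ with $m\ge 1$; this requires a careful check that $\widetilde{R}_B$ is continuous on $[0,\infty)$ and coercive (or at least that its infimum lies in a compact interval). The continuity follows because the likelihood, the priors, and $h$ are smooth in $\tau_1$; coercivity follows because $\mathbb{E}[T^{(r)}\mid\boldsymbol{\theta}]$ grows without bound as $\tau_1\to\infty$ (the stress never gets elevated for large $\tau_1$), so the $C_t$-term dominates. Once these two standard facts are recorded, the rest of the proof is a direct verification that Algorithm A implements the nested minimization identity above.
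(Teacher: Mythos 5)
Your proposal is correct and follows essentially the same route as the paper: the paper's proof writes $R_B(n,r,m,\tau_1,a)-R_B(n_B,r_B,m_B,\tau_{1B},a_B)$ as a telescoping sum of five differences, each nonnegative by the corresponding step of Algorithm A, which is exactly your nested-minimization identity combined with the optimality of $a_B$ for fixed $\boldsymbol{q}$. Your version is slightly more careful in that it explicitly records why the inner minimum over $\tau_1$ is attained (continuity plus coercivity via the $C_t\,\mathbb{E}[T^{(r)}]$ term) and why the discrete search is finite, points the paper's proof takes for granted.
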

    \begin{proof}
        It is enough to prove that for any sampling plan $(n,r,m,\tau_1,a)$, the following inequality holds:
        \begin{align*}
           [R_B(n,r,m,\tau_1,a)-R_B(n_B,r_B,m_B,\tau_{1B},a_B)]\geq 0.
        \end{align*}
        Now, the above difference can be written as
     \small   \begin{align*}
            R_B(n,r,m,\tau_1,a)-R_B(n_B,r_B,m_B,\tau_{1B},a_B)=&[R_B(n,r,m,\tau_1,a)-R_B(n,r,m,\tau_1,a_B)]\\
            &+[R_B(n,r,m,\tau_1,a_B)-R_B(n,r,m,\tau_{1B},a_B)]\\
            &+[R_B(n,r,m,\tau_{1B},a_B)-R_B(n,r,m_B,\tau_{1B},a_B)]\\
            &+[R_B(n,r,m_B,\tau_{1B},a_B)-R_B(n_B,r_B,m_B,\tau_{1B},a_B)]\\
            &+[R_B(n,r_B,m_B,\tau_{1B},a_B)-R_B(n_B,r_B,m_B,\tau_{1B},a_B)].
        \end{align*}
\normalsize From Algorithm A, it follows that   
  $[R_B(n,r,m,\tau_1,a)-R_B(n,r,m,\tau_1,a_B)]\geq 0$, $[R_B(n,r,m,\tau_1,a_B)-$\\ $R_B(n,r,m,\tau_{1B},a_B)]\geq 0$, $[R_B(n,r,m,\tau_{1B},a_B)-R_B(n,r,m_B,\tau_{1B},a_B)]\geq 0$, $[R_B(n,r,m_B,\tau_{1B},a_B)-$\\$R_B(n_B,r_B,m_B,\tau_{1B},a_B)]\geq0$,
  $[R_B(n,r_B,m_B,\tau_{1B},a_B)-R_B(n_B,r_B,m_B,\tau_{1B},a_B)]\geq 0$.
  Therefore, we get $[R_B(n,r,m,\tau_{1},a)-R_B(n_B,r_B,m_B,\tau_{1B},a_B)]\geq 0$. Hence, the proof is completed.
    \end{proof}\\
\section{Numerical study using real-life data and comparisons with other sampling plans}\label{num}
Here we illustrate the proposed method using an example and compute the optimum AABSP. Also, we study the effect of time cost components and hyperparameters on the optimum solution of AABSP. The optimum solution is compared with a CBSP and ACBSP. Here, we consider only two competing risks.

\subsection{An illustrative Example with comparisons}\label{example}
\begin{example}
    The hyperparameters of the prior are taken as $\alpha_1=2.5$, $\beta_1=1.5$, $\alpha_2=2.2$, $\beta_2=2.0$, $l_1=10$ and $l_2=10$. The cost coefficients of the loss function are taken as $a_0=2$, $a_1=3$, $a_2=3$, $a_{11}=4$, $a_{12}=4$ and $a_{22}=4$. The values of other cost components are $C_a=0.1$, $v_s=0.25$, $C_s=0.5$, $C_t=5$ and $C_r=40$.
\end{example}
In our proposed model, if we fix $m=0$, then the model becomes conventional non-accelerated BSP (CBSP) of \citet{prajapati2023optimal} and for $m=r$, the model becomes fully accelerated BSP (ACBSP) under SSSPALT under type-II censoring. For comparing the CBSP and ACBSP with AABSP, we calculate the percentage of relative risk savings (RRS) of a AABSP over CBSP and ACBSP, which are measured by $RRS_1$ and $RRS_2$, respectively, and provided by
\begin{align*}
    RRS_1=100\times \frac{R_1-R_{B}}{R_1}\%
\end{align*}
and 
\begin{align*}
    RRS_2=100\times \frac{R_2-R_{B}}{R_2}\%,
\end{align*}
where $R_1$ and $R_2$ are the optimal Bayes risk of CBSP and ACBSP, respectively. The optimal sampling plan $(n^*,r^*)$ and the optimal Bayes risk $R_1$ corresponding to CBSP are provided in Table \ref{opti}. The optimal sampling plan $(n_A,r_A,\tau_{1A})$ and the optimal Bayes risk $R_2$ of ACBSP, and for comparison $RRS_1$ and $RRS_2$ are given in Table \ref{opti}.
\normalsize\begin{table}[hbt!]
    \centering
      \caption{Optimal design}
    \begin{tabular}{|c|cc|cc|cc|cc|}
    \hline
  &  \multicolumn{2}{|c}{AABSP}&\multicolumn{2}{|c}{ACBSP}&\multicolumn{2}{|c}{CBSP}&\multicolumn{2}{|c|}{$RRS(\%)$}\\
    \hline
  $c_a$&$(n_B,r_B,m_B,\tau_{1B})$&$R_{B}$&$(n_A,r_A,\tau_{1A})$& $R_1$&$(n^*,r^*)$& $R_2$& $RRS_1$&$RRS_2$\\
 \hline
    0   & (5, 3, 3, 0.115)   & 35.964 & (5, 3, 0.115)   & 35.964 & (6, 3) & 36.582 & 0.000 & 1.718 \\
    0.1 & (5, 3, 2, 0.138)   & 36.252 & (4, 3, 0.142)   & 36.256 & (6, 3) & 36.582 & 0.011 & 0.893 \\
    0.2 & (5, 3, 2, 0.212)   & 36.477 & (5, 3, 0.301)   & 36.520 & (6, 3) & 36.582 & 0.170 & 0.289 \\
    \hline
    \end{tabular}
    \label{opti}
\end{table}

\subsection{Effect of the parameters}\label{effect}
In this section, we investigate the impact of cost components and hyperparameters on the optimal solution of the AABSP, in comparison with the CBSP and the ACBSP. The optimal designs corresponding to various values of the test cost $c_t$ and the hyperparameters $\alpha_1$ and $l_1$, under three different values of the acceleration cost coefficient $c_a = 0, 0.1, 0.2$, are summarized in Table~\ref{ct1}, Table~\ref{alpha1} and Table~\ref{l1}, respectively.
 Also, for the comparison of AABSP, CBSP and ACBSP, the $RRS_1$ and $RRS_2$ are tabulated. The other values of the cost components and hyperparameters, which are not mentioned in the tables, are kept fixed. 
\begin{table}[hbt!]
    \centering
      \caption{Optimal design for different $c_t$}
    \begin{tabular}{|cc|cc|cc|cc|cc|}
    \hline
   && \multicolumn{2}{|c}{AABSP}&\multicolumn{2}{|c}{ACBSP}&\multicolumn{2}{|c}{CBSP}&\multicolumn{2}{|c|}{$RRS(\%)$}\\
    \cline{3-10}
  $c_t$&$c_a$ & $(n_B,r_B,m_B,\tau_{1B})$&$R_{B}$&$(n_A,r_A,t_{1A})$& $R_1$&$(n^*,r^*)$& $R_2$& $RRS_1$&$RRS_2$\\
  \hline
   1&0& (4, 4, 4,  0.390)& 34.855&(4, 4,  0.390)& 34.855&(5, 4)&35.095&0.000&0.684\\
  1&0.1& (4, 4,  4,  0.539)& 34.986&  (4, 4,  0.539)& 34.986&(5, 4)& 35.095&0.000&0.351\\
  1&0.2&(4, 4, 4,  0.723)& 35.077& (4, 4,  0.723)& 35.077&(5, 4)& 35.095&0.000&0.051\\
\hline
2 & 0   & (5, 4, 3, 0.233)   & 35.250 & (4, 3, 0.240)   & 35.293 & (5, 3) & 35.633 & 0.122 & 1.077 \\
2 & 0.1 & (5, 4, 3, 0.282)   & 35.441 & (5, 4, 0.363) & 35.457 & (5, 3) & 35.633 & 0.045 & 0.540 \\
2 & 0.2 & (4, 3, 2, 0.296)   & 35.581 & (4, 3, 0.453)   & 35.593 & (5, 3) & 35.633 & 0.034 & 0.147 \\
\hline
3 & 0   & (5, 4, 4, 0.198)   & 35.547 & (5, 4, 0.198)   & 35.547 & (5, 3) & 35.995 & 0.000 & 1.245 \\
3 & 0.1 & (4, 3, 3, 0.247)   & 35.763 & (4, 3, 0.247)   & 35.763 & (5, 3) & 35.995 & 0.000 & 0.644 \\
3 & 0.2 & (4, 3, 2, 0.236)   & 35.918 & (4, 3, 0.315)   & 35.932 & (5, 3) & 35.995 & 0.039 & 0.214 \\
\hline
7 & 0   & (5, 2, 2, 0.152)   & 36.599 & (5, 2, 0.152)   & 36.599 & (5, 2) & 37.075 & 0.000 & 1.282 \\
7 & 0.1 & (4, 2, 2, 0.202)   & 36.795 & (4, 2, 0.202)   & 36.795 & (5, 2) & 37.075 & 0.000 & 0.755 \\
7 & 0.2 & (4, 2, 2, 0.211)   & 36.949 & (4, 2, 0.211) & 36.949 & (5, 2) & 37.075 & 0.000 & 0.339 \\
\hline
10&0&(5,  3,  3,  0.034)& 36.619&(5,  3,  0.034)& 36.619&(5, 2)&37.698&0.000&2.859\\
10&0.1&(3,  2,  2, 0.005) &36.942&(3,  2,  0.005) &36.942&(5, 2)&37.698&0.000&2.003\\
10&0.2&(3, 2, 2,  0.017)&37.236&(3, 2,  0.017)&37.236&(5, 2)&37.698&0.000&1.227\\
\hline
    \end{tabular}
    \label{ct1}
\end{table}

In Table~\ref{ct1}, it is observed that as the test cost $c_t$ increases, the corresponding time point $\tau_{1B}$ in the AABSP decreases. This trend arises because a higher time cost necessitates reducing the duration of the life test to minimize the overall expected cost. Consequently, the test is conducted for a shorter time under the lower stress level, leading to a decrease in the optimal time point $\tau_{1B}$ with increasing $c_t$.
It is seen that $RRS_2$ increases with $c_t$ and decreases with $c_a$, as expected. Also, in Table \ref{opti}, it is seen that when $c_t=5$ and $c_a=0.2$, AABSP is much better than the other two sampling plans. Also, it can be said that when $c_t$ is low and $c_a$ is high, CBSP is better than CABSP.

\begin{table}[hbt!]
    \centering
      \caption{Optimal design for different $\alpha_1$}
    \begin{tabular}{|cc|cc|cc|cc|cc|}
    \hline
   && \multicolumn{2}{|c}{AABSP}&\multicolumn{2}{|c}{ACBSP}&\multicolumn{2}{|c}{CBSP}&\multicolumn{2}{|c|}{$RRS(\%)$}\\
    \cline{3-10}
  $\alpha_1$&$c_a$ & $(n_B,r_B,m_B,\tau_{1B})$&$R_{B}$&$(n_A,r_A,t_{1A})$& $R_1$&$(n^*,r^*)$& $R_2$& $RRS_1$&$RRS_2$\\
  \hline
2&0& (4, 3,  3, 0.051)& 32.350&(4, 3,  0.051)& 32.350&(0, 0)& 32.873& 0.000 & 1.615 \\
  2&0.1& (4,  3, 3,  0.079)& 32.691&   (4,  3,  0.079)& 32.691&(0, 0)& 32.873& 0.000 & 0.557 \\
  2&0.2&(0, 0, 0, 0)& 32.873& (0, 0,  0)& 32.873&(0, 0)& 32.873& 0.000 & 0.000 \\
\hline
2.2 & 0   & (4, 3, 3, 0.074)& 33.907 & (4, 3, 0.074)& 33.907  & (5, 2) & 34.764 & 0.000 & 2.526 \\
2.2 & 0.1 &(4, 3, 3, 0.107)& 34.224 & (4, 3, 0.107)& 34.224  & (5, 2) & 34.764 & 0.000 & 1.577 \\
2.2 & 0.2 & (4, 3, 3,  0.149)& 34.509& (4, 3,  0.149)& 34.509 & (5, 2) & 34.764 & 0.000 & 0.739 \\
\hline
2.8 & 0   &(5, 3, 3, 0.146)& 37.678 & (5, 3,  0.146)& 37.678 & (6, 3)& 38.147&0.000&1.246  \\
2.8 & 0.1 & (5, 3,  2, 0.166)& 37.916 &  (5, 3,  0.215)& 37.928 & (6, 3)& 38.147&0.032&0.610  \\
2.8 & 0.2 & (5, 3, 3,  0.284)& 38.109& (5, 3,  0.284)& 38.109 &(6, 3)& 38.147&0.000&0.100  \\
\hline
3 & 0   & (5, 3, 3, 0.161)&38.638&  (5, 3, 0.161)&38.638&(5, 2)& 39.015&0.000&0.976  \\
3 & 0.1 &  (5, 3, 2, 0.184)& 38.844& (4, 2,  0.210)& 38.853 & (5, 2)& 39.015&0.023&0.440 \\
3 & 0.2 & (4, 2, 1, 0.170)& 38.955 & (4, 2,  0.268)& 38.967 & (5, 2)& 39.015&0.031&0.155 \\
\hline
    \end{tabular}
    \label{alpha1}
\end{table}

In Table~\ref{alpha1}, it is observed that the Bayes risk increases with $\alpha_1$, as expected. When $\alpha_1 = 2$ and $c_a = 0$, the lot is accepted without any life testing under the CBSP. This is because the Bayes risk simplifies to 
\[
R_B = a_0 + a_1 \frac{\alpha_1}{\beta_1} + a_2 \frac{\alpha_2}{\beta_2} + a_{11} \frac{\alpha_1(\alpha_1+1)}{\beta_1^2} + a_{12} \left( \frac{\alpha_1}{\beta_1} \cdot \frac{\alpha_2}{\beta_2} \right) + a_{22} \frac{\alpha_2(\alpha_2+1)}{\beta_2^2},
\]
which is minimal without life testing. However, for the AABSP, life testing is still conducted.
It is also noted that the relative risk saving $RRS_2$ generally increases with $\alpha_1$, indicating improved performance of the AABSP compared to the CBSP. Nevertheless, beyond a certain threshold of $\alpha_1$, $RRS_2$ starts to decrease. This is because the difference between the rejection cost and the prior expected acceptance cost initially decreases and then increases with $\alpha_1$.

\begin{table}[hbt!]
    \centering
      \caption{Optimal design for different $l_1$}
    \begin{tabular}{|cc|cc|cc|cc|cc|}
    \hline
   && \multicolumn{2}{|c}{AABSP}&\multicolumn{2}{|c}{ACBSP}&\multicolumn{2}{|c}{CBSP}&\multicolumn{2}{|c|}{$RRS(\%)$}\\
    \cline{3-10}
  $l_1$&$c_a$ & $(n_B,r_B,m_B,\tau_{1B})$&$R_{B}$&$(n_A,r_A,t_{1A})$& $R_1$&$(n^*,r^*)$& $R_2$& $RRS_1$&$RRS_2$\\
  \hline
\hline
    20  & 0   & (4, 3, 3, 0.113) & 35.904 & (4, 3, 0.113) & 35.904 & (6, 3) & 36.582 & 0.000 & 1.888 \\
    20  & 0.1 & (4, 3, 3, 0.170) & 36.011 & (4, 3, 0.170) & 36.011 & (6, 3) & 36.582 & 0.000 & 1.112 \\
    20  & 0.2 & (5, 3, 2, 0.169) & 36.403 & (4, 3, 3, 0.198) & 36.416 & (6, 3) & 36.582 & 0.036 & 0.492 \\
    \hline
    50  & 0   & (4, 3, 3, 0.132) & 35.866 & (4, 3, 3, 0.132) & 35.866 & (6, 3) & 36.582 & 0.000 & 1.996 \\
    50  & 0.1 & (4, 3, 3, 0.170) & 36.123 & (4, 3, 0.170) & 36.123 & (6, 3) & 36.582 & 0.000 & 1.271 \\
    50  & 0.2 & (4, 3, 2, 0.162) & 36.343 & (4, 3, 0.214) & 36.348 & (6, 3) & 36.582 & 0.014 & 0.658 \\
    \hline
    100 & 0   & (4, 3, 3, 0.146) & 35.856 & (4, 3, 0.146) & 35.856 & (6, 3) & 36.582 & 0.000 & 2.025 \\
    100 & 0.1 & (4, 3, 3, 0.187) & 36.102 & (4, 3, 0.187) & 36.102 & (6, 3) & 36.582 & 0.000 & 1.330 \\
    100 & 0.2 & (4, 3, 2, 0.167) & 36.305 & (4, 3, 0.228) & 36.317 & (6, 3) & 36.582 & 0.033 & 0.763 \\
    \hline
    500 & 0   & (5, 3, 2, 0.125) & 35.840 & (5, 3, 0.175) & 35.844 & (6, 3) & 36.582 & 0.011 & 2.070 \\
    500 & 0.1 & (4, 3, 2, 0.154) & 36.065 & (4, 3, 0.209) & 36.077 & (6, 3) & 36.582 & 0.033 & 1.434 \\
    500 & 0.2 & (4, 3, 2, 0.170) & 36.257 & (4, 3, 0.241) & 36.279 & (6, 3) & 36.582 & 0.061 & 0.896 \\
    \hline
    \end{tabular}
    \label{l1}
\end{table}

In Table \ref{l1}, it is seen that Bayes risk of AABSP decreases. This is due to fact that when $l$ increase the expected time duration can decreases. Also, the solution of CBSP is unchanged with $l_1$, as expected. In Tables \ref{opti}-\ref{l1}, the solution of CBSP is unchanged with $C_a$, as expected. 
\section{Real Data analysis}\label{com}
In the previous section, the time point is taken as a variable quantity that is determined by minimizing the Bayes risk. However, in some situations, consumers want life testing in normal operating conditions up to a certain time. If the consumer does not get sufficient failure time information, then they agree for higher stress. Therefore, in this scenario, the time point is pre-fixed. The other components of the designing parameter of $\boldsymbol{q}$ is determined by minimizing the Bayes risk.
\subsection{Data Analysis}\label{analysis} A solar lighting device sample of size $35$ was put on a life test at a normal operating temperature $293K$. After a time point $\tau_1=5$ (in hundred hours), the temperature was changed from $293K$ to $353K$. The life test is terminated after a time point $\tau_2=6$ (in hundred hours). The solar device has two failure modes: one is capacitor failure, denoted by 1, and another is controller failure, denoted by 2. The data is given in Table \ref{data}.
\begin{table}[hbt!]
    \centering
    \caption{A competing risk data under SSSPALT}
    \begin{tabular}{|c|c|l|}
    \hline
    Operating Conditions & Cause& Failure time \\
    \hline
    \multirow{3}{*}{ At $293K$ (before $\tau_1=5)$}     &1& 0.140, 1.582, 4.612  \\
    \cline{2-3}
      &\multirow{2}{*}{2}& 0.783, 1.324, 1.716, 1.794, 1.883, 2.293, 2.660, 2.674, 2.725,\\
        &&3.085, 3.924, 4.396, 4.892  \\
        \hline
  \multirow{2}{*}{  At $353K$ (after $\tau_1=5)$ }    & 1& 5.002, 5.112, 5.147, 5.238, 5.244, 5.247, 5.305, 5.407, 5.445, 5.483 \\
  \cline{2-3}
    & 2& 5.022, 5.082, 5.337, 5.408, 5.717\\
    \hline
    \end{tabular}
    \label{data}
\end{table}
For the data given in Table \ref{data}, we get the MLEs using the \citet{han2010inference}:  $\widehat\lambda_{1}=0.0222$, $\widehat\lambda_{2}=0.0960$, $\widehat\phi_1=55.356$ and $\widehat\phi_2=6.359$. The standard deviations of $\lambda_1$ and $\lambda_2$ are taken as 0.0127 and 0.0266, respectively, using the observed Fisher information matrix.
Now, using these MLEs, the reliability function for component $i$ under SSSALT can be written as 
\begin{align*}
    \overline{F}_i(t)=\begin{dcases*}
        \exp[-\widehat\lambda_{i}t]& if $t\leq \tau_1$\\
        \exp[-\widehat\lambda_{i}\left\{\tau_1+\widehat\phi_i(t-\tau_1)\right\}] &if $t>\tau_1$,
    \end{dcases*}
\end{align*}
for $i=1,2$ and  the reliability function for unit under SSSALT can be written as 
\begin{align*}
    \overline{F}(t)=\begin{dcases*}
        \exp[-(\widehat\lambda_{1}+\widehat\lambda_{2)}t]& if $t\leq \tau_1$\\
        \exp[-(\widehat\lambda_{1}+\widehat\lambda_{2})\tau_1-(\widehat\lambda_{1}\widehat\phi_1+\widehat\lambda_{2}\widehat\phi_2)(t-\tau_1)] &if $t>\tau_1$,
    \end{dcases*}
\end{align*}
Using these functions, we draw the parametric reliability curve for component 1, component 2, and for the unit, which are represented by dotted lines in Figure \ref{fig}. Also, using the Kaplan-Meier estimator, we draw the non-parametric reliability curve for component 1, component 2, and for the unit, which are represented by straight lines in Figure \ref{fig}. 
\begin{figure}[hbt!]
    \centering
    \includegraphics[width=0.33\linewidth]{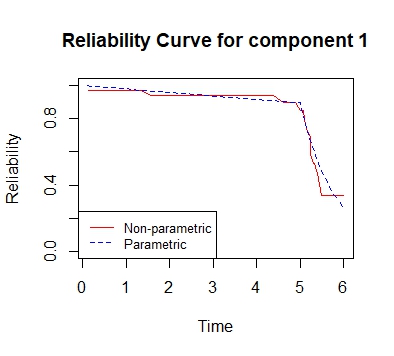}\includegraphics[width=0.33\linewidth]{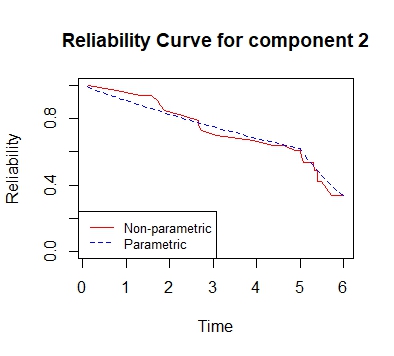}\includegraphics[width=0.33\linewidth]{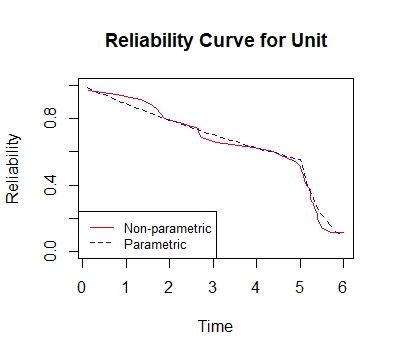}
    \caption{Non-parametric and parametric reliability curve}
    \label{fig}
\end{figure}

In Figure \ref{fig}, it is seen that the parametric curve and the non-parametric curve are close to each other for all cases. Therefore, we can say that lifetimes of components 1 and 2 are well-fitted in the exponential distribution and also, the lifetime of the unit is well-fitted in our assumed distribution. Therefore, we can say that in the data, components 1 and 2 are independent in nature. 
\subsection{Illustrative Example }

\begin{example}
    The hyperparameters of the priors are taken as $\alpha_{1}=3.05$, $\alpha_{2}=13$, $\beta_{1}=137.35$, $\beta_{2}=135.44$, $l_1=109.072$ and $l_2=11.718$ The values of other cost components are $c_t=0.3$, $c_s=0.5$, $v_s=0.3$, $c_a=0.05$, $c_r=80$. The cost components of the loss function are taken $a_{0}=6$, $a_{1}=200$, $a_{2}=200$, $a_{11}=4000$, $a_{22}=4000$, $a_{12}=4000$. the time point $\tau_1$ is taken as $5$.\\
\end{example} 
\begin{table}[hbt!]
    \centering
      \caption{Optimal design for different scenarios when $\tau_1=5$}
    \begin{tabular}{|cc|cc|cc|cc|}
    \hline
   \multicolumn{2}{|c}{AABSP}&\multicolumn{2}{|c}{ACBSP}&\multicolumn{2}{|c}{CBSP}&\multicolumn{2}{|c|}{RRS}\\
    \hline
  $(n_B,r_B,m_B)$&$R_{B}$&$(n_A,r_A)$& $R_1$&$(n^*,r^*)$& $R_2$& $RRS_1(\%)$&$RRS_2(\%)$\\
    \hline
 (7, 5, 3)&77.126&(6, 5)&77.283&(7, 5)&77.236&0.204	&0.142\\
      \hline
    \end{tabular}
    \label{optimal}
\end{table}

In the Table \ref{optimal}, AABSP is much better than other two sampling plans.
\subsection{Effect of the parameters}
\begin{table}[hbt!]
    \centering
      \caption{Optimal design for different scenarios for different $c_a$}
    \begin{tabular}{|c|cc|cc|cc|cc|}
    \hline
  &  \multicolumn{2}{|c}{AABSP}&\multicolumn{2}{|c}{ACBSP}&\multicolumn{2}{|c}{CBSP}&\multicolumn{2}{|c|}{$RRS(\%)$}\\
    \hline
  $c_a$&     $(n_B,r_B,m_B)$&$R_{B}$&$(n_A,r_A)$& $R_1$&$(n^*,r^*)$& $R_2$& $RRS_1$&$RRS_2$\\
    \hline
   0& (7, 5, 3)&77.030&(6, 5)&77.119&(6, 4)&77.236&0.115&0.267\\
     0.1& (7, 5, 2)&77.196&(6, 5)&77.443&(7, 5)&77.236&0.319&0.052\\
       0.15& (7, 5, 1)&77.234&(5, 5)&77.583&(7, 5)&77.236&0.449&0.003\\
       0.2& (7, 5, 0)&77.236&(4, 4)&77.708&(7, 5)&77.236&0.607&0.000\\
      \hline
    \end{tabular}
  
    \label{ca}
\end{table}

In Table \ref{ca}, it is seen that when $c_a$ increases the value of $m$ decreases. Also, it is seen that $RRS_1$ increases with $c_a$ and $RRS_2$ decreases with $c_a$. This indicates that AABSP converges to CBSP as expected. 
\begin{table}[hbt!]
    \centering
    \caption{Optimal design for different scenarios for different $c_t$ (rounded to 3 decimal places)}
    \begin{tabular}{|c|cc|cc|cc|cc|}
    \hline
     &  \multicolumn{2}{|c}{AABSP}&\multicolumn{2}{|c}{ACBSP}&\multicolumn{2}{|c}{CBSP}&\multicolumn{2}{|c|}{$RRS(\%)$}\\
    \hline
    $c_t$ 
    & $(n_B,r_B,m_B)$ & $R_{B}$ 
    & $(n_A,r_A)$ & $R_1$ 
    & $(n^*,r^*)$ & $R_2$ 
    & $RRS_1$ & $RRS_2$ \\
    \hline
    0    & (5, 5, 0)   & 75.850 & (5, 5)   & 76.664 & (5, 5)   & 75.850 & 1.061 & 0.000 \\
    0.05 & (6, 5, 2)   & 76.661 & (5, 5)   & 76.983 & (6, 5)   & 76.704 & 0.418 & 0.056 \\
    0.15 & (7, 5, 3)   & 77.481 & (6, 5)   & 77.566 & (7, 4)   & 77.685 & 0.110 & 0.263 \\
    0.2  & (7, 4, 3)   & 77.806 & (7, 5)   & 77.837 & (8, 4)   & 78.002 & 0.040 & 0.251 \\
    0.3  & (7, 3, 3)   & 78.309 & (7, 3)   & 78.309 & (8, 3)   & 78.556 & 0.000 & 0.314 \\
    \hline
    \end{tabular}
    \label{ct}
\end{table}

In Table \ref{ct}, it is seen that for higher values of $c_t$, the AABSP is equivalent to ACBSP and for $c_t=0$, AABSP is equivalent to CBSP. The fact is already discussed in Table \ref{ct1}.
\begin{table}[hbt!]
    \centering
    \caption{Optimal design for different scenarios for different $c_r$ (rounded to 3 decimal places)}
    \begin{tabular}{|c|cc|cc|cc|cc|}
    \hline
     &  \multicolumn{2}{|c}{AABSP}&\multicolumn{2}{|c}{ACBSP}&\multicolumn{2}{|c}{CBSP}&\multicolumn{2}{|c|}{$RRS(\%)$}\\
    \hline
    $c_r$ 
    & $(n_B,r_B,m_B)$ & $R_{B}$ 
    & $(n_A,r_A)$ & $R_1$ 
    & $(n^*,r^*)$ & $R_2$ 
    & $RRS_1$ & $RRS_2$ \\
    \hline
    70 & (0, 0, 0)   & 70.000 & (0, 0)   & 70.000 & (0, 0)   & 70.000 & 0.000 & 0.000 \\
    75 & (7, 5, 2)   & 74.219 & (5, 4)   & 74.407 & (6, 4)   & 74.275 & 0.253 & 0.075 \\
    85 & (6, 5, 3)   & 79.356 & (6, 5)   & 79.444 & (7, 5)   & 79.550 & 0.111 & 0.244 \\
    90 & (0, 0, 0)   & 80.469 & (0, 0)   & 80.469 & (0, 0)   & 80.469 & 0.000 & 0.000 \\
    \hline
    \end{tabular}
    \label{cr}
\end{table}

In Table \ref{cr}, we provide the effect of $C_r$. For $C_r=70,90$, the optimal BSPs represent no sampling cases. For $C_r=70$, the Bayes risk is $R_B=C_r$, that is, the lot is rejected without life testing. For $C_r=90$, $R_B=a_0+a_1\alpha_1/\beta_1+a_2\alpha_2/\beta_2+a_{11}\alpha_1(\alpha_1+1)/\beta_1^2+a_{12}(\alpha_1/\beta_1)(\alpha_2/\beta_2)+a_{22}\alpha_2(\alpha_2+1)/\beta_2^2$ and the lot is accepted without life testing. For $C_r=75,85$, the optimal sampling parameters have sampling cases. 
\begin{table}[hbt!]
    \centering
    \caption{Optimal design for different scenarios for different $C_s$ (rounded to 3 decimal places)}
    \begin{tabular}{|c|cc|cc|cc|cc|}
    \hline
     &  \multicolumn{2}{|c}{AABSP}&\multicolumn{2}{|c}{ACBSP}&\multicolumn{2}{|c}{CBSP}&\multicolumn{2}{|c|}{$RRS(\%)$}\\
    \hline
    $C_s$ 
    & $(n_B,r_B,m_B)$ & $R_{B}$ 
    & $(n_A,r_A)$ & $R_1$ 
    & $(n^*,r^*)$ & $R_2$ 
    & $RRS_1$ & $RRS_2$ \\
    \hline
    0.35 & (6, 4, 3) & 77.475 & (5, 5) & 77.553 & (7, 5) & 77.586 & 0.101 & 0.143 \\
    0.40 & (5, 4, 2) & 77.736 & (4, 4) & 77.772 & (6, 4) & 77.914 & 0.046 & 0.229 \\
    0.45 & (4, 4, 4) & 77.972 & (4, 4) & 77.972 & (4, 3) & 78.158 & 0.000 & 0.238 \\
    0.50 & (3, 3, 3) & 78.129 & (3, 3) & 78.129 & (4, 3) & 78.358 & 0.000 & 0.292 \\
    \hline
    \end{tabular}
    \label{Cs}
\end{table}
In table \ref{Cs}, sample size is decreases with $c_s$, as expected. $RRS_1$ is decreases with $c_a$ but $RRS_2$ is increases with $c_s$. This is due to fact that AABSP is coincide with ACBSP for high value of $c_s$.
\subsection{Decision for some simulated data}
 We generate Type-II adaptive step-stress competing risk data based on the life testing plan $(n,r,m,\tau_1)=(7, 5, 3, 5)$ from independent the exponential distribution with $\lambda_{11}=0.0221$, $\lambda_{12}=0.0959$ and $\phi_1=55.101$ and $\phi_2=6.358$. The data sets $(\boldsymbol{y},\boldsymbol{d})=(y_1,y_2,y_3,y_4,y_5,d_{11},d_{12},d_{21},d_{22})$ and the corresponding decisions about the lot acceptance or rejection $a_B$ are given in Table \ref{ttt} for illustration purposes. Also, in Table \ref{ttt}, the decision of changing the stress at $\tau_{1}$, $w_1$  and $w_2$ are provided. 

\begin{table}[hbt!] 
    \centering
 \caption{Simulated data sets and corresponding decision about the lot }
   \begin{tabular}{|c|ccccccccccccc|}
    \hline
       $i$& $y_1$ &$y_2$ &$y_3$ &$y_4$&$y_5$& $d_{11}$&$d_{12}$&$d_1$& Change & $d_{21}$&$d_{22}$& $e_i$& $a_B$ \\
        &  && &&&&&& the stress? &&&  &  \\
        \hline
     1& 2.77& 10.11&  5.42&  5.07&  8.08&1 &1&2&Yes& 2& 1&-25.69&1\\
     2&   6.25& 3.38& 5.37& 5.91& 3.92& 0 &4&4&No&0& 1&8.800&0\\
       3& 5.79 &4.87 &6.42 &5.21 &5.02& 1& 0&1&Yes& 0& 4&-7.85&1\\
4& 6.81  &3.70&  1.57& 18.96 & 5.73& 1&2&3& No&0& 2&-4.38&1\\
  5&    6.43& 3.17 &5.30& 4.30& 5.66& 1& 1&2&Yes& 0& 3&-6.72&1\\
 6&   0.98 &13.72&  4.07&  2.86& 18.37&  0&  4&4&No&  0&  1&-17.88&1\\
  7& 5.69&  0.30& 11.59&  2.13&  4.69&  1&  4&5&No&  0&  0&5.88&0\\
        \hline
    \end{tabular}
    \label{ttt}
\end{table}
The BSP can be illustrated as follows. Seven items are put on life test at $293K$. After time $\tau_1=5$, the number of failures is observed. If the number of failures is less than $3$, the stress is increased to $353K$ and the test continues up to $5$ failures observed. If the number of failures is greater than equals to $3$, the stress is unchanged and the test continues up to $5$ failures observed at $293K$.   Then we calculate  $e_i=\varphi(\boldsymbol{y}_i,\boldsymbol{d}_i)-C_r$, for $i=1,\ldots,7$ If  $e_i<0$, then $a_B=1$ and the lot is accepted. If $e_i\geq 0$ then $a_B=0$ and the lot is rejected. In data no. 7, it is seen that the fifth failure is observed before the time $\tau_1=5$. Therefore, the life test is terminated at 4.69 in a hundred hours.
\section{Conclusion}\label{con}
This work proposes an Adaptive BRASP based on a simple step-stress test for type-II competing censored data. Through numerical studies, it is observed that when the stress-changing time point is treated as a variable, the AABSP performs significantly better than the other two plans—CBSP and ACBSP-under certain cost configurations. In most cases, the Bayes risk of AABSP either closely matches or coincides with that of the ACBSP, highlighting its effectiveness. Notably, AABSP consistently outperforms CBSP, especially in scenarios involving high time cost ($c_t$) and low additional acceleration cost ($c_a$).

Although ACBSP has not been widely explored in existing literature, the current study emphasizes its practical value alongside AABSP. The observed Relative Risk Saving ($RRS$) from CBSP reaches up to approximately 2.5\%, suggesting substantial cost efficiency. Furthermore, when the stress-changing time point is fixed, AABSP demonstrates significantly better performance compared to both CBSP and ACBSP. These findings underscore the potential of AABSP as a robust and cost-effective methodology under censored data and competing risks in context of BRASP.

In this work, we have considered the exponential distribution for illustration. However, the proposed method can be extended to other lifetime distributions. The work can also be extended to other censoring schemes and more than 2-stage step-stress test.
\bibliography{citation}
\appendix

\section{Additional stress}\label{appa}
$n_{as}$ denotes the expected number of items put on higher stress levels after $\tau_1$.
\begin{align*}
    n_{as}&={\mathbb{E}}_{\boldsymbol{\theta}}[\mathbb{E}[(n-D_1)\ | \ D_1<m, \boldsymbol{\theta}]]\\
&=\int_{\lambda_1=0}^\infty\cdots\int_{\lambda_J=0}^\infty\sum_{d_1=0}^{m-1}(n-d_1)\binom{n}{d_1}(1-\exp(-\lambda \tau_1))^{d_1}\exp(-(n-d_1)\lambda \tau_1)~\left[\prod_{j=1}^Jp_{1j}(\lambda_j)d\lambda_j\right]\\ 
    &=\sum_{d_1=0}^{m-1}\sum_{i=0}^{d_1}(n-d_1)\binom{n}{d_1}\binom{d_1}{i}(-1)^{d_1-i}\prod_{j=1}^J\left[\int_{\lambda_j=0}^\infty\exp(-(n-i)\lambda_j \tau_1)~p_{1j}(\lambda_j)d\lambda_j\right],
\end{align*}
where $\lambda=\lambda_1+\cdots+\lambda_J$.
\section{ Expected time duration}\label{appb}
The time duration under type-II censoring is $T^{(r)}$.
For $0<t<\tau_1$
\allowdisplaybreaks{\begin{align*}
{\mathbb{E}}[T^{(r)} | \ \boldsymbol{\theta}]
=&\int_{t=0}^{\tau_1}R_{T^{(r)}\ | \ \boldsymbol{\theta}}(t)~dt+\int_{t=\tau_1}^{\infty}R_{T^{(r)}}(t\ | \ \boldsymbol{\theta})~ dt\\
\end{align*}}
For $0<t<\tau_1$
\begin{align*}
   R_{T^{(r)}}(t\ | \ \boldsymbol{\theta})&=P(\text{No. of failures in the interval $(0,t)$ is less than $r$})\\
   &=\sum_{d_1=0}^{r-1}\binom{n}{d_1}[1-R(t\ | \ \boldsymbol{\theta})]^{d_1}[R(t\ | \ \boldsymbol{\theta})]^{n-d_1}\\
   &=\sum_{d_1=0}^{r-1}\sum_{j=0}^{d_1}\binom{n}{d_1}\binom{d_1}{j}(-1)^{d_1-j}\exp[-\lambda(n-j)t]~~~~~~~~~~~~~~~~0<t<\tau_1
\end{align*}
Therefore,
\begin{align*}
\int_0^{\tau_1} R_{T^{(r)}}(t\ | \ \boldsymbol{\theta})~dt=\sum_{d_1=0}^{r-1}\sum_{j=0}^{d_1}\binom{n}{d_1}\binom{d_1}{j}(-1)^{d_1-j}\frac{1-\exp\left[-\left(\sum_{i=1}^J\lambda_i\right)(n-j)\tau_1\right]}{\left(\sum_{i=1}^J\lambda_i\right)(n-j)}
\end{align*}
For $t>\tau_1$
\small \begin{align*}
&~~~~~~R_{T^{(r)}}(t\ | \ \boldsymbol{\theta})\\&=P(\text{No. of failures in the interval $(0,t)$ is less than $r$ and  $D_1<r$}\ | \ \boldsymbol{\theta})\\
&=P(\text{No. of failures in the interval $(0,t)$ is less than $r$}\ |\  D_1< r,\boldsymbol{\theta})P(D_1<r\ | \ \boldsymbol{\theta})\\
&=\sum_{d_1=0}^{r-1}\sum_{d_2=0}^{r-d_1-1}\frac{n!}{d_1!d_2!(n-d)!}[1-R(\tau_1\ | \ \boldsymbol{\theta})]^{d_1}[R(\tau_1\ | \ \boldsymbol{\theta})-R(t\ | \ \boldsymbol{\theta})]^{d_2}[R(t\ | \ \boldsymbol{\theta})]^{n-d}\\
&=\sum_{d_1=0}^{r-1}\sum_{d_2=0}^{r-d_1-1}\frac{n!}{d_1!d_2!(n-d)!}[1-\exp(-\lambda \tau_1)]^{d_1}[\exp(-\lambda \tau_1)-\exp(-\lambda (\tau_1+\phi(t-\tau_1))]]^{d_2}[\exp(-\lambda (\tau_1+\phi(t-\tau_1))]^{n-d}\\
&=\sum_{d_1=0}^{r-1}\sum_{d_2=0}^{r-d_1-1}\sum_{j=0}^{d_1}\sum_{k=0}^{d_2}\frac{n!}{d_1!d_2!(n-d)!}(-1)^{j+k}\exp[-\lambda (n-d_1+j)\tau_1]\exp[-(\lambda_1\phi_1^\delta+\lambda_2\phi_2^\delta)(n-d+k) (t-\tau_1)].
\end{align*}
\normalsize Therefore,
\begin{align*}
&\int_{t=\tau_1}^\infty R_{T^{(r)}}(t\ | \ \boldsymbol{\theta}) ~dt\\
  =& \sum_{d_1=0}^{r-1}\sum_{d_2=0}^{r-d_1-1}\sum_{j=0}^{d_1}\sum_{k=0}^{d_2}\frac{n!}{d_1!d_2!(n-d)!}(-1)^{j+k}\exp[-\lambda (n-d_1+j)\tau_1]\frac{1}{(\lambda_1\phi_1^\delta+\lambda_2\phi_2^\delta)(n-d+k)} 
\end{align*}
\section{The expression of $f_{(W_1,W_2,\boldsymbol{D})\ | \ \boldsymbol{\theta})}(w_1,w_2,\boldsymbol{d}\ | \ \boldsymbol{\theta})$:}\label{appc}
The joint distribution of $\boldsymbol{D}$ is
\begin{align*}
    P(\boldsymbol{D}=\boldsymbol{d}\ | \ \boldsymbol{\theta})=A(\boldsymbol{d})\prod_{i=1}^2\prod_{j=1}^Jp_{ij}^{d_{ij}}[F(\tau_1\ | \ \boldsymbol{\theta})]^{d_1}[1-F(\tau_1\ | \ \boldsymbol{\theta})]^{n-d_1},
\end{align*}
where $A(\boldsymbol{d})=\binom{n}{d_1}\binom{d_1}{d_{11},d_{12},\ldots, d_{1J}}\binom{d_2}{d_{21},d_{22},\ldots, d_{2J}}$

\begin{theorem}(\citet{balakrishnan2009exact})
\footnotesize\begin{align*}
&f_{(W_1,W_2,\boldsymbol{d})\ | \ \boldsymbol{\theta})}(w_1,w_2,\boldsymbol{d}\ | \ \boldsymbol{\theta})\\
=&\begin{dcases*}  
h_{n\tau_1}(w_1) \binom{r}{d_{21},d_{22},\ldots, d_{2J}}\prod_{j=1}^Jp_{2j}^{d_{2j}}\gamma\left(w_2,r,{\phi_1\lambda_1+\phi_2\lambda_2}\right)&$d_1=0$\\
A(\boldsymbol{d})\sum_{j=0}^{d_1}(-1)^{j}\binom{d_1}{j}\prod_{i=1}^2\prod_{j=1}^Jp_{ij}^{d_{ij}}\exp[-\lambda(n-d_{1}+j)\tau_1]\gamma\left(w_1-(n-d_1+j)\tau_1),d_{1},{\lambda}\right)\gamma\left(w_2,d_{2},{\phi_1\lambda_1+\phi_2\lambda_2}\right)&$d_{1}>0$\\
\sum_{k=r}^n\binom{n}{k}\binom{r}{d_{11},\ldots,d_{1J}}\sum_{j'=0}^{k}(-1)^{j'}\binom{k}{j'}\prod_{j=1}^Jp_{1j}^{d_{1j}}\exp[-\lambda(n-k+j')\tau_1]\gamma(w_1-(n-k+j')\tau_1,r,\lambda)h_0(w_2)&$d_1=r$
\end{dcases*}
 \end{align*}
\normalsize    where the function $\gamma(x,\alpha,\beta)$ is the pdf of the gamma distribution with parameters $(\alpha,\beta)$ which is defined as
\begin{align*}  
\gamma(x,\alpha,\beta)=\begin{cases}
    \frac{\beta^{\alpha}}{\Gamma(\alpha)}x^{\alpha-1}\exp(-\beta x)& \text{for } x>0\\
    0 & \text{otherwise},
    \end{cases}
\end{align*}
the function $h_y(x)$ is a degenerate distribution at the point $y$, which is defined as \begin{align}
    h_y(x)=\begin{cases}
    1& \text{for } x=y\\
    0 & \text{otherwise},
    \end{cases}
\end{align}
\end{theorem}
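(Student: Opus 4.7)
The plan is to derive the three cases by conditioning on the value of $D_1$ and exploiting the Markov / memoryless structure of the exponential model, with the three cases corresponding to whether the test terminates before, after, or exactly at the stress-change epoch $\tau_1$.

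First I would set up the joint density of the ordered failure times $T^{(1)}<\cdots<T^{(r)}$ together with the cause vector $(C^{(1)},\dots,C^{(r)})$ under Type-II censoring, directly from the competing-risks pdf \eqref{DENS_CT} and the cumulative-exposure pdf stated just after it. Using the standard identity for $r$ order statistics of $n$ i.i.d.\ lifetimes (density $\tfrac{n!}{(n-r)!}\prod f(t^{(i)})\cdot [1-F(t^{(r)})]^{n-r}$), and multiplying by the multinomial count $A(\boldsymbol{d})$ that assigns the observed causes to the $r$ failure positions while respecting the pre/post-$\tau_1$ split, I obtain a joint density in $(t^{(1)},\dots,t^{(r)},\boldsymbol{d})$. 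In each case I would then push this density forward to the sufficient statistics $(W_1,W_2)$.

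For the case $d_1=r$ (all failures before $\tau_1$), $W_2=0$ deterministically, so the density is supported at $w_2=0$, giving the factor $h_0(w_2)$; the pre-$\tau_1$ part reduces to the well-known density of the TTT-statistic $w_1=\sum_{i=1}^{r} t^{(i)}+(n-r)t^{(r)}$ for i.i.d.\ exponentials with rate $\lambda$, restricted to $\{t^{(r)}\le\tau_1\}$. The restriction on $t^{(r)}\le\tau_1$ is what produces the sum $\sum_{k=r}^{n}\binom{n}{k}$ together with the inner alternating sum $\sum_{j'}(-1)^{j'}\binom{k}{j'}e^{-\lambda(n-k+j')\tau_1}$: expanding the binomial $[1-F(\tau_1)]$ correctly accounts for the number $k$ of units that would have failed by $\tau_1$ if no censoring had occurred, and the truncation to $t^{(r)}\le\tau_1$ contributes the shifted $\gamma(w_1-(n-k+j')\tau_1,r,\lambda)$ factor via a standard linear change of variables on the spacings. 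The symmetric case $d_1=0$ uses the memoryless property: conditional on all $n$ units surviving $\tau_1$, the post-$\tau_1$ lifetimes are i.i.d.\ exponentials with rate $\phi_1\lambda_1+\phi_2\lambda_2$, so $W_1=n\tau_1$ almost surely (yielding $h_{n\tau_1}(w_1)$) and $W_2$ is a $\mathrm{Gamma}(r,\phi_1\lambda_1+\phi_2\lambda_2)$ TTT-statistic; the causes after $\tau_1$ are assigned by the multinomial probabilities $p_{2j}$.

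The case $0<d_1<r$ is the technical core and I would handle it by a two-step conditioning: first condition on the event $\{D_1=d_1\}=\{T^{(d_1)}\le\tau_1<T^{(d_1+1)}\}$ and factor the joint density as (pre-$\tau_1$ piece in $t^{(1)},\dots,t^{(d_1)}$) $\times$ (post-$\tau_1$ piece in $t^{(d_1+1)},\dots,t^{(r)}$). For the pre-$\tau_1$ piece I would change variables to $w_1-(n-d_1)\tau_1=\sum_{i=1}^{d_1}t^{(i)}$ (with the $(n-d_1)\tau_1$ contribution from the units still alive at $\tau_1$), yielding a density of $\mathrm{Gamma}(d_1,\lambda)$-type in $w_1-(n-d_1+j)\tau_1$ after the indicator $\mathbf{1}(T^{(d_1+1)}>\tau_1)$ is expanded by inclusion--exclusion, which is precisely what generates the alternating sum $\sum_{j=0}^{d_1}(-1)^j\binom{d_1}{j}e^{-\lambda(n-d_1+j)\tau_1}$. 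For the post-$\tau_1$ piece, the memoryless property gives independent $\mathrm{Exp}(\phi_1\lambda_1+\phi_2\lambda_2)$ residual lifetimes for the surviving $n-d_1$ units, whose TTT-statistic is $W_2\sim\mathrm{Gamma}(d_2,\phi_1\lambda_1+\phi_2\lambda_2)$, and the $d_{2j}$'s follow a multinomial with probabilities $p_{2j}$. Multiplying the two factors, together with the combinatorial constant $A(\boldsymbol{d})$, gives the claimed formula.

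The main obstacle I anticipate is the bookkeeping of the pre-$\tau_1$ inclusion--exclusion expansion and the change of variables on the spacings, namely verifying that the shifted argument of $\gamma(\cdot,d_1,\lambda)$ is exactly $w_1-(n-d_1+j)\tau_1$ with the correct binomial weights; once that identity is in place, the rest of the derivation is a routine factorisation.
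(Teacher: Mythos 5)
The paper does not actually prove this theorem: it is imported verbatim by citation to Balakrishnan (2009), so there is no in-paper argument to compare against. Your proposal supplies the standard derivation that underlies results of this type, and its overall architecture is sound: condition on $D_1$, use memorylessness of the exponential to factor the pre- and post-$\tau_1$ contributions, recognise $W_1$ and $W_2$ as total-time-on-test statistics, and obtain the mixture-of-shifted-gamma form by a binomial expansion. (For the record, the cited literature usually reaches the same representation by computing the \emph{conditional moment generating function} of $(W_1,W_2)$ given the failure counts and inverting it term by term; your direct density/change-of-variables route is equivalent and arguably more transparent.)

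Two points of bookkeeping deserve correction before the sketch becomes a proof. First, in the case $0<d_1<r$ you attribute the alternating sum $\sum_{j=0}^{d_1}(-1)^j\binom{d_1}{j}e^{-\lambda(n-d_1+j)\tau_1}$ to expanding the indicator $\mathbf{1}(T^{(d_1+1)}>\tau_1)$. That indicator only contributes the single survival factor $e^{-(n-d_1)\lambda\tau_1}$; the inclusion--exclusion actually comes from the \emph{right-truncation to $(0,\tau_1)$} of the $d_1$ pre-change lifetimes: the sum of $d_1$ i.i.d.\ exponentials truncated at $\tau_1$ has density $(1-e^{-\lambda\tau_1})^{-d_1}\sum_{j=0}^{d_1}(-1)^j\binom{d_1}{j}e^{-j\lambda\tau_1}\gamma(s-j\tau_1,d_1,\lambda)$ (expand $(1-e^{-(\lambda-t)\tau_1})^{d_1}$ in its MGF), and multiplying by $P(D_1=d_1)=\binom{n}{d_1}(1-e^{-\lambda\tau_1})^{d_1}e^{-(n-d_1)\lambda\tau_1}$ and shifting by $(n-d_1)\tau_1$ yields exactly the claimed weights and the argument $w_1-(n-d_1+j)\tau_1$. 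Second, if you carry out the $d_1=0$ case as you describe, the joint density must carry the factor $P(D_1=0\mid\boldsymbol{\theta})=e^{-n\lambda\tau_1}$ in front of $h_{n\tau_1}(w_1)\,\gamma(w_2,r,\phi_1\lambda_1+\phi_2\lambda_2)$; as displayed in the theorem this factor is absent (the stated expression integrates to total mass one on the event $\{D_1=0\}$ alone, which cannot be right), so your derivation would in fact surface an apparent typographical omission in the quoted statement rather than reproduce it.
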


\section{Expression of $R_1$}\label{appd}
The expression of $R_1(\boldsymbol{q})$ is given In (\ref{R1d}) as
\begin{align*}
    R_1(\boldsymbol{q})={\mathbb{E}}_{\boldsymbol{\lambda}}[h(\boldsymbol{\lambda}]+\sum H(\boldsymbol{d}),
\end{align*}
where $${\mathbb{E}}_{\boldsymbol{\lambda}}[h(\boldsymbol{\lambda})]=a_0+\sum_{j=1}^Ja_j\frac{\alpha_j}{\beta_j}+\underset{i< j}{\sum_{i=1}^J\sum_{j=1}^J}a_{ij}\frac{\alpha_i\alpha_j}{\beta_i\beta_j}+{\sum_{j=1}^J}a_{jj}\frac{\alpha_j(\alpha_j+1)}{\beta_j^2}.$$
For the expression of $H(\boldsymbol{d})$, we need to consider three cases:\\\\
\textbf{Case 1 : }When $d_1=0$,
\begin{align*}   H(\boldsymbol{0},\boldsymbol{d}_2)&=\binom{r}{d_{21},d_{22},\ldots, d_{2J}}\int_{w_2=0}^{c_1(n*\tau_1,\boldsymbol{d})}~I(n\tau_1\leq c'(\boldsymbol{d}))\frac{w_2^{d_2-1}}{\Gamma(d_2)}\Bigg[a_0H_{1}(n\tau_1,w_2,\boldsymbol{d},\boldsymbol{0}) \\
&+\sum_{j=0}^Ja_j~H_{1}(n\tau_1,w_2,\boldsymbol{d},\boldsymbol{p}_j)+\underset{i\leq j}{\sum_{i=1}^J\sum_{j=1}^J}a_{ij}~H_{1}(n\tau_1,w_2,\boldsymbol{d},\boldsymbol{p}_{ij})~\Bigg]~dw_2
\end{align*}
\textbf{Case 2 : }When $\boldsymbol{d}>\boldsymbol{0} $,
\begin{align*}
H(\boldsymbol{d})=&A(\boldsymbol{d})\sum_{j=0}^{d_1}(-1)^{j}\binom{d_1}{j}\int_{w_1=(n-d_1+j)\tau_1}^{c'(\boldsymbol{d})}\int_{w_2=0}^{c_1(w_1,\boldsymbol{d})}\frac{(w_1-(n-d_1+j)\tau_1)^{d_1-1}w_2^{d_2-1}}{\Gamma(d_1)\Gamma(d_2)}\Bigg[a_0H_{1}(w_1,w_2,\boldsymbol{d},\boldsymbol{0}) \\
&+\sum_{j=0}^Ja_j~H_{1}(w_1,w_2,\boldsymbol{d},\boldsymbol{p}_j)+\underset{i\leq j}{\sum_{i=1}^J\sum_{j=1}^J}a_{ij}~H_{1}(w_1,w_2,\boldsymbol{d},\boldsymbol{p}_{ij})~\Bigg]~dw_2~dw_1
\end{align*}
\textbf{Case 3 : }When $d_2=0$,
\begin{align*}
    H(\boldsymbol{d}_1,\boldsymbol{0})=&\sum_{k=r}^n\binom{n}{k}\binom{r}{d_{11},\ldots,d_{1J}}\sum_{j'=0}^{k}(-1)^{j'}\binom{k}{j}\int_{w_1=(n-d_1+j)\tau_1}^{c'(\boldsymbol{d})}\frac{(w_1-(n-d_1+j)\tau_1)^{r-1}}{\Gamma(r)} \\
&\Bigg[a_0H_{1}(w_1,0,\boldsymbol{d}_1,\boldsymbol{0},\boldsymbol{0})+\sum_{j=0}^Ja_j~H_{1}(w_1,0,\boldsymbol{d}_1,\boldsymbol{0},\boldsymbol{p}_j)+\underset{i\leq j}{\sum_{i=1}^J\sum_{j=1}^J}a_{ij}~H_{1}(w_1,0,\boldsymbol{d}_1,\boldsymbol{0},\boldsymbol{p}_{ij})~\Bigg]dw_1
\end{align*}

\end{document}